\newtheorem{theorem}{Theorem}[section]
\newtheorem{proposition}[theorem]{Proposition}
\newtheorem{lemma}[theorem]{Lemma}
\newtheorem{remark}[theorem]{Remark}
\numberwithin{equation}{section}
\newcommand{\Z}{\mathbb{Z}}
\newcommand{\R}{\mathbb{R}}
\newcommand{\n}{\mathbb{N}}
\newcommand{\p}{\mathbb{P}}
\newcommand{\esp}{\mathbb{E}}
\newcommand{\bra}{\langle}
\newcommand{\ket}{\rangle}
\newcommand{\norm}[1]{\left\lVert #1 \right\rVert}
\newcommand{\e}{\mathrm{e}}
\newcommand{\bT}{\mathbf{T}}
\def \simless {\mathbin{\lower 3pt\hbox{$\rlap{\raise 5pt
              \hbox{$\char'074$}}\mathchar"7218$}}}
\date{}
\author{Olivier Bourget, Gregorio R. Moreno Flores, Amal Taarabt}
\title[Anderson Model in a Decaying Potential]{One-dimensional Discrete Anderson Model in a Decaying  Random Potential: from a.c. Spectrum to Dynamical Localization}
\begin{document}

\maketitle
\begin{abstract}
	We consider a one-dimensional Anderson model where the potential decays in average like $n^{-\alpha}$, $\alpha>0$. This simple model is known to display a rich phase diagram with different kinds of spectrum arising as the decay rate $\alpha$ varies.
	
	We review an article of Kiselev, Last and Simon where the authors show a.c. spectrum in the super-critical case $\alpha>\frac12$, a transition from singular continuous to pure point spectrum in the critical case $\alpha=\frac12$, and dense pure point spectrum in the sub-critical case $\alpha<\frac12$. We present complete proofs of the cases $\alpha\ge\frac12$ and simplify some arguments along the way.
	We complement the above result by discussing the dynamical aspects of the model. We give a simple argument showing that, despite of the spectral transition, transport occurs for all energies for $\alpha=\frac12$. Finally, we discuss a theorem of Simon on dynamical localization in the sub-critical region $\alpha<\frac12$. This implies, in particular, that the spectrum is pure point in this regime.
\end{abstract}

\tableofcontents


\section{Introduction}


Disordered systems in material sciences have been the source of a plethora of interesting phenomena and many practical applications. The addition of impurities in otherwise fairly homogeneous materials is known to induce new behaviours such as Anderson localization where wave packets get trapped by the disorder and conductivity can be suppressed. 
Even though localization has been fairly studied for the one-dimensional Anderson tight-binding model \cite{Car,Car,CKM,DG,GMP,GZ,JZ,KS}, the multidimensional case is still a challenge where transition \textit{localization-delocalization} is expected for small disorder.

It is natural to expect that accurate mathematical models for disorder media should display an interesting phase diagram. 
In the case of Anderson localization, we consider the random operator
\begin{equation*}
	H = \Delta + \lambda V\quad \mathrm{on}\quad \ell^2(\Z^d).
\end{equation*}  
where $\Delta$ is the usual discrete Laplacian, $V$ acts by multiplication by a family of independent random variables and $\lambda$ is a coupling constant tuning the strength of the disorder. This is known as the Anderson model and can be seen as a simple model for the dynamics of an electron on a disordered lattice. In the absence of the potential, it is known that the spectrum of $H$ is absolutely continuous and wave packets are diffusive or `delocalized'. It is expected that, for small coupling constants, the effect of the disorder does not destroy conductivity and the system remains delocalized. In contrast, large coupling constants should lead to absence of diffusion or `localization'. Mathematically, this would be reflected at the spectral level by a transition from absolutely continuous to pure point spectrum.

The picture is completely understood in one-dimension: there is localization for all non-zero values of the coupling constant \cite{Car,GMP,KS,DSS}. In higher dimensions, localization has been proved for several regimes including high disorder, extreme energies and edge of the spectrum
(see \cite{AW} and references therein). However, there is no proof of absolutely continuous spectrum so far. 
Some results have been known on the Bethe lattice and tree graphs \cite{Kl,ASW,FHS}. A delocalization-localization transition has been proved for random Landau Hamiltonians where non-trivial transport occurs near Landau levels \cite{GKS}.

In order to understand how absolutely continuous spectrum survives in spite of the disorder, it has been proposed to modulate the random potential by a decaying envelope \cite{DSS, D, FGKM,KKO,Kr,KU}, this is, to replace $V(n)$ by $a_n V(n)$, where $(a_n)_n$ is a deterministic sequence satisfying $a_n \sim |n|^{-\alpha}$ for some decay rate $\alpha>0$.
For large values of $\alpha$ and $d\geq 3$, scattering methods can be applied, leading to the proof of absolutely continuous spectrum \cite{Kr}. Purely absolutely continuous spectrum was showed in \cite{JL}. A wider range of values of $\alpha$ was considered by Bourgain in dimension $2$\cite{B1} and higher \cite{B2}. Point spectrum was also showed to hold outside the essential spectrum of the operator in \cite{KKO}. 

In one dimension, transfer matrix analysis can be applied, leading to a complete understanding of the spectrum of the operator \cite{DSS, KLS}. This time however, the phase diagram of the model remains non-trivial and absolutely continuous spectrum can still be observed for large values of $\alpha$. As it is natural to expect, small values of $\alpha$ lead to pure point spectrum. Interestingly, there is a critical value of $\alpha$ for which a transition from pure point to singular continuous spectrum is observed as a function of the coupling constant. The three above regimes correspond to $\alpha>\frac12$, $\alpha<\frac12$ and $\alpha=\frac12$ respectively. A complete study of the spectral behaviour of the one-dimensional model is given in \cite{KLS}.

From the dynamical point of view, it is standard to show that the system propagates $\alpha>\frac12$. For the critical case $\alpha=\frac12$, no transition occurs at the dynamical level, despite of the spectral transition: there are non-trivial transport exponents for all values of the coupling constant \cite{GKT}. This provides yet another example of a model where spectral localization and transport coexist. Dynamical localization in the regime $0<\alpha<\frac12$ was showed in \cite{Si82} providing, as a by-product, a proof of point spectrum in this regime. In \cite{BMT02}, we provided a different proof of this behaviour for a continuum version of the model which can be easily adapted to the discrete case.

\subsection*{Structure of the article}

We introduce the model and state the main results in Section \ref{sec:results}. The transfer matrix analysis of \cite{KLS} is presented in details in Section \ref{sec:transfer-matrices}, some technical estimates being deferred to the Appendix. We show absolutely continuous spectrum for the super-critical case in Section \ref{sec:super-critical-case}. In Section \ref{sec:critical-case}, we discuss the spectral transition and the absence of dynamical localization in the critical case. Finally, we outline the proof of dynamical localization for the sub-critical case in Section \ref{sec:dyn-loc}.

\section{Model and Results}\label{sec:results}


We consider a one dimensional Schr\"odinger random operator $H_{\omega,\lambda}$ defined by 
\begin{equation}\label{op H}
 H_{\omega,\lambda} =\Delta +\lambda V_\omega\quad\mathrm{on}\quad \ell^2(\n),
\end{equation}
where $(\Delta x)(n)=x_{n+1}+x_{n-1}$ is the discrete Laplacian, $V_{\omega}$ is a random potential described below and $\lambda\in\R$ is a coupling parameter. Denoting by $\delta_n$ the $n$-th canonical vector, the action of the random potential $V_\omega$ is given by 
\begin{equation}\label{pot V}
V_\omega \delta_n = a_n \omega_n \delta_n, \quad n\in \n,
\end{equation}
where $(\omega_n)_{n\geq 0}$ are  i.i.d. bounded centered random variables defined on a probability space $(\Omega,\mathcal{F},\p)$ with bounded density $\rho$. We will assume that all $\omega_n$'s have variance equal to 1. The envelope on the environment is given by a positive sequence $a_n$ such that $\displaystyle\lim_{n\to\infty}n^{\alpha}a_n=1$ where $\alpha>0$ denote the decay rate.
Under these hypotheses $(H_{\omega})_{\omega}$ is non-ergodic family of self-adjoint operators.
Note that the random variables are assumed to be bounded so that the density $\rho$ has a bounded support. 


\subsection{Spectral Transition}


We recall the spectral results of \cite{KLS} which give the characterization of the spectrum of $H_{\omega,\lambda}$ in terms of the parameters.

\begin{theorem}\label{thm:main}
 Under the hypothesis above, the essential spectrum of $H_{\omega,\lambda}$ is $\p$-a.s. equal to $[-2,2]$. Furthermore, 

\begin{enumerate}
	\item[(1)] \textbf{Super-critical case.} If $\alpha>\frac12$ then for all $\lambda\in\mathbb{R}$, the spectrum of $H_{\omega,\lambda}$ is almost surely purely absolutely continuous in $(-2,2)$ 
					\vspace{1ex}

    \item[(2)] \textbf{Critical case.} If $\alpha=\frac12$ then for all $\lambda \neq 0$, the a.c. spectrum of $H_{\omega,\lambda}$ is almost surely empty. Moreover,
			\begin{enumerate}[label=\alph*.-]
			\item[a.] If $|\lambda|\geq 2$, the spectrum of $H_{\omega,\lambda}$ is almost surely pure point in $(-2,2)$.
			
			\vspace{1ex}
			
			\item[b.] If $|\lambda|< 2$, the spectrum of $H_{\omega,\lambda}$ is purely singular continuous in $\{|E|<\sqrt{4-\lambda^2}\}$ and pure point in $\{\sqrt{4-\lambda^2}<|E|<2\}$, almost surely.
		\end{enumerate}

	\vspace{1ex}
		
	\item[(3)] \textbf{Sub-critical case.} If $\alpha<\frac12$ then for all $\lambda\neq 0$, the spectrum of $H_{\omega,\lambda}$ is almost surely pure point in $(-2,2)$.
\end{enumerate}
\end{theorem}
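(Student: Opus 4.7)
The plan is to treat all three regimes through a unified transfer matrix and Prüfer variables analysis, reducing the spectral question to the asymptotic growth of generalized eigenfunctions. The essential spectrum statement is immediate from Weyl's theorem: since $a_n\to 0$, the multiplication operator $\lambda V_\omega$ is compact on $\ell^2(\n)$, so it does not modify the essential spectrum of $\Delta$, giving $\sigma_{\mathrm{ess}}(H_{\omega,\lambda})=[-2,2]$ almost surely.

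For $E\in(-2,2)$ I parametrize $E=2\cos k$ with $k\in(0,\pi)$ and rewrite the Schrödinger recursion in modified Prüfer coordinates $(R_n,\theta_n)$. A direct computation gives, to leading order,
\begin{equation*}
\log\frac{R_{n+1}^2}{R_n^2} = -\frac{2\lambda a_n\omega_n}{\sin k}\sin(2\theta_n) + \frac{\lambda^2 a_n^2\omega_n^2}{\sin^2 k}\sin^2(2\theta_n) + O(a_n^3),
\end{equation*}
together with $\theta_{n+1}=\theta_n+k+O(\lambda a_n\omega_n)$. Telescoping, the drift of $\log R_N^2$ is driven by $\sum_{n<N}a_n^2$, which is convergent when $\alpha>\tfrac12$, grows as $\log N$ when $\alpha=\tfrac12$, and as $N^{1-2\alpha}$ when $\alpha<\tfrac12$. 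This trichotomy is the heart of the proof.

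In the super-critical case the quadratic drift is summable and, by Kolmogorov-type arguments, the stochastic terms converge almost surely uniformly on compacts of $(-2,2)$, so every solution stays bounded at a.e.\ energy. No solution being subordinate, Gilbert--Pearson subordinacy theory yields purely absolutely continuous spectrum on $(-2,2)$. In the critical case one establishes the sharp asymptotic $\log R_N^2\sim\gamma(E)\log N$ for an explicit exponent $\gamma(E)$ proportional to $\lambda^2/(4-E^2)$ and satisfying $\gamma(E)=1$ precisely at $|E|=\sqrt{4-\lambda^2}$; the polynomial growth rules out a.c.\ spectrum, and a power-law subordinacy argument in the style of Jitomirskaya--Last shows that the spectrum is singular continuous in the window $\gamma(E)<1$ and pure point where $\gamma(E)>1$. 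The condition $|\lambda|\geq 2$ merely erases the SC window. Finally, in the sub-critical case the quadratic drift dominates and forces exponential-type growth of $R_N$ along typical trajectories, so pure point spectrum follows from the fractional-moment/Simon argument outlined in Section~\ref{sec:dyn-loc}.

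The main obstacle is the critical case. Because $(a_n\omega_n)$ is non-stationary, Furstenberg/Oseledets theory is not directly available, and the sharp logarithmic asymptotic of $\log R_N^2$ must be extracted through a delicate martingale analysis of the Prüfer phases: one must show that $\sin(2\theta_n)$ equidistributes fast enough that the linear stochastic term and the oscillatory part of the quadratic term are of lower order, leaving the clean deterministic drift $\gamma(E)\log N$. A further subtlety is controlling exceptional sets uniformly enough in $E$ that the subordinacy-based spectral conclusions hold $\p$-almost surely on full-measure sets of energies, and not merely at a fixed $E$.
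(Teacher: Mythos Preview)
Your overall strategy---Pr\"ufer variables, martingale control of the phases, and the trichotomy in $\sum a_n^2$---matches the paper exactly, as does the identification of the critical exponent $\gamma(E)=\lambda^2/(4-E^2)$ (the paper's $\beta$ is $\gamma/2$). The differences lie in which spectral criteria you invoke at the end.

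For the super-critical case the paper does \emph{not} pass through Gilbert--Pearson. Instead it uses the integrated Last--Simon criterion: $\liminf_n\int_a^b\|\mathbf T_{\omega,n}(E)\|^4\,dE<\infty$ implies purely a.c.\ spectrum on $(a,b)$. This reduces, after Fatou and Fubini, to the elementary moment bound $\sup_n\esp[R_n^4(E)]<\infty$ uniformly on $[a,b]$, which follows from a one-line conditioning argument. Your route via boundedness of solutions plus subordinacy is legitimate but costs more: to \emph{exclude} singular spectrum you need boundedness for \emph{all} $E$ in an interval, not merely a.e.\ $E$, so you must actually establish the uniform-on-compacts convergence you invoke. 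That is doable here (the quadratic series is absolutely summable), but the paper's approach sidesteps the issue entirely.

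For the critical case the paper again differs from your Jitomirskaya--Last packaging. After proving $\log R_n\sim\beta\log n$ (this is your martingale step, including the delicate control of $\sum\esp[V_{\omega,j}^2]\cos(2\bar\theta_j)$), the paper shows via an Osceledec-type argument on unimodular matrices that for a.e.\ $(E,\omega)$ there is a \emph{unique} initial direction whose solution decays like $n^{-\beta}$, and then appeals to rank-one perturbation theory (Simon--Wolff) to conclude: the decaying solution is $\ell^2$ iff $\beta>\tfrac12$, giving pure point there and purely singular continuous otherwise. Jitomirskaya--Last power-law subordinacy would give Hausdorff-dimensional information but does not by itself separate pure point from singular continuous; you would still need the uniqueness-of-decaying-solution step and something like the rank-one argument to finish.
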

We will provide complete proofs of Parts 1 and 2 in Sections \ref{sec:super-critical-case} and \ref{sec:critical-case} respectively. Part 3 is a consequence of the dynamical localization result discussed below.

\subsection{Dynamical Behavior}

	
	
	

Delocalization, or spreading of wave packets, for $\alpha>\frac12$ follows from the RAGE theorem. 
The situation is particularly interesting for $\alpha=\frac12$: non-trivial transport occurs regardless of the precise nature of the spectrum. In particular, this provides an example of an operator displaying pure point spectrum but no dynamical localization. 

To describe the dynamics, we consider the random moment of order $p\ge0$ at time $t$ for the time evolution, initially spatially localized on the origin and localized in energy by a positive function $f\in C_{0}^\infty(\R)$,
\begin{equation}
	\mathbb{M}_{\omega}(p,f,T):=\frac{2}{T}\int^{\infty}_0 \e^{-\frac{2t}{T}} \left\| |X|^{\frac{p}{2}} \e^{-itH_{\omega,\lambda}}f(H_{\omega,\lambda}) \delta_0 \right\|^2 dt,
\end{equation}
where $|X|$ denotes the position operator and $C^{\infty}_0(\R)$ is the space of infinitely differentiable compactly supported functions. 
The following result is proved in \cite{GKT} and 
establishes the absence of dynamical transition in the critical case.
	
\begin{theorem}\label{thm:GKT}
	Let $\alpha=\frac12$ and $\lambda \in \R$. The following holds $\p$-amost surely: for all positive $f\in C^{\infty}_0(\R)$ constantly equal to $1$ on a compact interval $J\subset (-2,2)$, for any $\nu>0$ and all $p>2\gamma_J +\nu$ where $\gamma_J=\inf\{\lambda(8-2E^2)^{-1}:\, E\in J\}$, there exists $C_{\omega}(p,J,\nu)$ such that
	\begin{eqnarray}
		\mathbb{M}_{\omega}(p,f,T) \geq C_{\omega}(p,J,\nu) T^{p-2\gamma_J-\nu}.
	\end{eqnarray}
\end{theorem}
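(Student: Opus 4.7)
The plan is to combine a general dynamical lower bound due to Damanik--Tcheremchantsev (refined in \cite{GKT}) with an upper bound on the growth of the transfer matrices that is specific to the critical case $\alpha=\tfrac12$.

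The abstract criterion I would invoke takes, schematically, the form
\begin{equation*}
\mathbb{M}_\omega(p,f,T)\ \gtrsim\ T^{p}\int_{J} \frac{dE}{\sup_{1\leq n\leq T} \|T_n(E,\omega)\|^{2} + 1}\,,
\end{equation*}
so that a pointwise polynomial bound $\|T_n(E,\omega)\| \leq C(\omega,E)\, n^{\gamma(E)+\epsilon}$ gets converted into a polynomial lower bound whose exponent is governed by the \emph{infimum} of the local growth rates $\gamma(E)$ over $E\in J$, by Laplace's method applied to the $E$-integral. This is precisely the mechanism that produces the infimum in the definition of $\gamma_J$.

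To supply the required transfer matrix bound, I would rely on the Prüfer-variable analysis of Section \ref{sec:transfer-matrices}. Writing $E=2\cos k$ with $k\in(0,\pi)$, one has $8-2E^2=8\sin^2 k$, and the transfer matrix $T_n(E)$ is encoded by the Prüfer amplitude $R_n(E)$ and phase $\theta_n(E)$. The recursion for $\log R_n(E)^2$ decomposes into a sum of (near-)martingale increments with quadratic variation of order $\lambda^2 a_j^2/\sin^2 k$. In the critical regime $a_j^2 \sim j^{-1}$, a second-moment computation yields $\esp[R_n(E)^2] \leq C\,n^{2\gamma(E)}$ with $\gamma(E)$ matching the constant appearing in $\gamma_J$. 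A Chebyshev bound combined with Borel--Cantelli along a polynomially spaced subsequence (interpolated via the obvious product bound on consecutive one-step transfer matrices) then yields the required pointwise almost sure polynomial bound on $\|T_n(E)\|$ for each fixed $E\in J$, and Fubini lets one feed it into the $E$-integral above.

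Putting the two ingredients together and absorbing $\epsilon$ into the free parameter $\nu$ delivers the claimed lower bound $\mathbb{M}_\omega(p,f,T)\geq C_\omega(p,J,\nu)\,T^{p-2\gamma_J-\nu}$. The main obstacle is making the second-moment bound on $R_n(E)$ match the sharp exponent $2\gamma(E)$ and be well-behaved enough in $E$ to survive the Laplace analysis on $J$: this requires careful accounting of the lower-order terms in the Prüfer recursion, together with the equidistribution modulo $\pi$ of the residual Prüfer phases established in the Kiselev--Last--Simon analysis. Once this input is in hand, converting pointwise estimates into the integrated lower bound is routine.
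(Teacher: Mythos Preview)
The paper does not give its own proof of this theorem: it is stated with attribution to \cite{GKT} and no argument is supplied. What the paper does prove is the weaker Proposition~\ref{thm:critical-delocalization}, via a completely different route: a deterministic lower bound on \emph{all} generalized eigenfunctions (Lemma~\ref{thm:lower-bound-eigenfunctions}), fed into a time-averaged moment computation over the pure point subspace. So your sketch is not being compared to a proof in the paper, but to the argument in \cite{GKT}, whose architecture (a Damanik--Tcheremchantsev-type integral lower bound plus an upper bound on transfer-matrix growth) you have correctly identified.

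There is, however, a concrete gap in the mechanism you propose for the transfer-matrix input. The bound $\esp[R_n(E)^2]\le C\,n^{2\gamma(E)}$ is \emph{not} what a second-moment computation gives. Conditioning on $\mathcal{F}_{n-1}$ in \eqref{eq:main-rec-with-phases} yields $\esp[R_{n+1}^2\mid\mathcal{F}_{n-1}]=R_n^2\bigl(1+\lambda^2 a_n^2\cos^2(\bar\theta_n)/\sin^2 k\bigr)$, so even after invoking equidistribution of $\bar\theta_n$ to replace $\cos^2(\bar\theta_n)$ by $\tfrac12$, the resulting exponent is $\lambda^2/(2\sin^2 k)=4\gamma(E)$, not $2\gamma(E)$. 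This is not a lower-order issue: the discrepancy comes from the variance of the martingale term $-V_{\omega,j}\sin(2\bar\theta_j)/\sin k$ in the expansion of $\log R_n^2$, which contributes an extra $2\gamma(E)\log n$ to $\log\esp[R_n^2]$ on top of the almost-sure drift. Equidistribution of phases fixes the drift, not the fluctuation; Chebyshev plus Borel--Cantelli on second moments would therefore deliver an a.s.\ exponent twice too large, and the final lower bound on $\mathbb{M}_\omega$ would have the wrong power of $T$.

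The fix is to bypass moments and use the almost-sure asymptotic directly: Proposition~\ref{thm:lyapuvov} (together with Lemma~\ref{thm:comparison}) gives $\log\|\mathbf{T}_{\omega,n}(E)\|/\log n\to\gamma(E)$ $\p$-a.s.\ for each fixed $E$, via the martingale decomposition \eqref{eq:martingale-decomposition} and Azuma's inequality (Lemma~\ref{thm:martingales}), which controls the fluctuation to $o(\log n)$ rather than to its standard deviation. Fubini then gives the bound $\|\mathbf{T}_{\omega,n}(E)\|\le C_\omega(E)\,n^{\gamma(E)+\epsilon}$ for $\p\otimes\mathrm{Leb}$-a.e.\ $(\omega,E)$, which is exactly the input the abstract criterion needs. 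Once you swap the moment route for this a.s.\ route, the rest of your outline goes through.
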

%

In Proposition \ref{thm:critical-delocalization}, we establish a weaker result that already implies the absence of dynamical localization in the critical regime where $\alpha=\frac12$. 
\bigskip

To characterize the dynamical localization, we define the eigenfunction correlator 
\begin{equation}\label{correlator}
	Q_\omega(m,n;I)=\sup_{\substack{f\in C_0(I)\\ \norm{f}_\infty\le1}}
	 \left| \langle \delta_m , P_I(H_{\omega,\lambda})f(H_{\omega,\lambda}) \delta_n \rangle \right|,
\end{equation}
where $P_I(H_{\omega})$ denotes the spectral projection of $H_{\omega,\lambda}$ on the interval $I$ and $C_0(I)$ is the space of bounded measurable compactly supported functions in $I$.
 We say that $H_{\omega,\lambda}$ exhibits dynamical localization in an interval $I\subset\R$ if we have
 \begin{equation}\label{DL}
  \sum_n\esp\left[Q_\omega(m,n;I)^2\right]<\infty,
 \end{equation}
for all $m\in\Z$.
We state the main result of \cite{Si82}.
\begin{theorem}\label{thm:dynamical-localization}
	Let $0<\alpha<\frac12$ and $\lambda\neq 0$. Then, for each $m\in\Z$ and each compact energy interval $I\subset (-2,2)$, there exist constants $C=C(m,I)>0$, $c(m,I)>0$ such that
	\begin{equation}\label{eq:FM-bound}
		\esp[Q_\omega(m,n;I)^2] \leq C e^{-cn^{1-2\alpha}},
	\end{equation}
	for all $m,\, n\in \Z$.
	 In particular, $H_{\omega,\lambda}$ exhibits dynamical localization in the interval $I$.
\end{theorem}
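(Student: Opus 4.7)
The plan is to adapt the Prüfer / transfer matrix machinery of Section \ref{sec:transfer-matrices} to the sub-critical regime, extract stretched-exponential growth of solutions at all energies $E \in I$, and feed this into a fractional-moment bound on the Green's function, from which dynamical localization follows by the standard Aizenman-Molchanov mechanism.

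First, for each $E = 2\cos k \in (-2,2)$, I would introduce modified Prüfer coordinates $(R_n(E),\theta_n(E))$ for solutions of $H_{\omega,\lambda} u = E u$. A direct expansion yields
\begin{equation*}
	\log R_{n+1}^2 - \log R_n^2 = \lambda a_n \omega_n \varphi(\theta_n,E) + \lambda^2 a_n^2 \omega_n^2 \psi(\theta_n,E) + O(a_n^3),
\end{equation*}
with $\varphi,\psi$ smooth, bounded and $2\pi$-periodic in $\theta$. The linear term is a bounded-difference martingale; the quadratic term, after averaging in $\theta$, produces a deterministic drift of order $\sum_{k \le n} a_k^2 \sim n^{1-2\alpha}/(1-2\alpha)$, which is the finite-volume analogue of the Furstenberg formula. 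An Azuma concentration bound then gives
\begin{equation*}
	\p\bigl( \bigl| \log R_n(E) - \gamma(E)\, n^{1-2\alpha} \bigr| > \eps\, n^{1-2\alpha} \bigr) \le \e^{-c \eps^2 n^{1-2\alpha}},
\end{equation*}
for some $\gamma(E) > 0$ bounded away from zero on the compact sub-interval $I \subset (-2,2)$.

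Second, I would convert this solution growth into Green's function decay. For $E \in I$ and $\eta > 0$, $G_\omega(m,n;E+i\eta)$ can be written in terms of two linearly independent solutions produced by the transfer matrices of Section \ref{sec:transfer-matrices}; combining the estimate above with its analogue for the dual solution, one obtains, on an event of probability at least $1 - \e^{-c|m-n|^{1-2\alpha}}$, a deterministic bound $|G_\omega(m,n;E+i\eta)| \le C(\omega,E)\,\e^{-c|m-n|^{1-2\alpha}}$. Interpolating with the a priori fractional-moment bound $\esp[|G_\omega(m,n;E+i\eta)|^s] \le C_s$ for $s \in (0,1)$ (available because $\rho$ is bounded and $\lambda a_n \neq 0$) yields
\begin{equation*}
	\esp\bigl[ |G_\omega(m,n;E+i\eta)|^s \bigr] \le C\, \e^{-c'|m-n|^{1-2\alpha}},
\end{equation*}
uniformly in $\eta > 0$ and $E \in I$. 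The eigenfunction correlator is then controlled by the Aizenman-Molchanov / Graf bound $\esp[Q_\omega(m,n;I)^2] \le \esp[Q_\omega(m,n;I)] \le C \int_I \esp[|G_\omega(m,n;E+i0)|^s]\, dE$, where we used $Q_\omega \le 1$, and integrating over $I$ gives \eqref{eq:FM-bound}.

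The principal obstacle lies in the first step: obtaining the drift constant $\gamma(E) > 0$ \emph{uniformly} on compacts of $(-2,2)$, with deviation control strong enough to survive the integration in $E$ performed at the end. The slowly varying envelope $a_n$ destroys the stationarity behind the classical Furstenberg theorem, so angular equidistribution must be established by hand. A standard route is to impose a Diophantine condition on $k = \arccos(E/2)$ (a full-measure set of $I$) and to use a dyadic block decomposition of $\log R_n$ on scales where $a_n$ is essentially constant, applying Azuma-type inequalities on each block to control separately the linear martingale, the resonant quadratic contribution $\omega_n^2 a_n^2 \esp_\theta[\psi]$, and the oscillating remainder.
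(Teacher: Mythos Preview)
Your overall framework matches the paper's outlined proof: both use the fractional-moment method, reducing the correlator bound to decay of $\esp[|G_{\omega,L}(m,n;E)|^s]$, and both drive that decay from the stretched-exponential growth of the Pr\"ufer radii established via the recursion of Section~\ref{sec:transfer-matrices}. The paper's route is slightly more direct than yours: rather than proving a large-deviation bound on $\log R_n$ and interpolating against an a~priori fractional-moment estimate, it bounds $\esp[|G_{\omega,L}(m,n;E)|^s]$ by the inverse fractional moment $\esp[\|\mathbf{T}_{\omega,[m,n]}(E)\widehat X_m\|^{-s}]$ and shows the latter decays like $\e^{-cn^{1-2\alpha}}$ using Proposition~\ref{thm:lyapuvov} as input. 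Your interpolation argument works (with H\"older to split the bad event, since the a~priori bound is only in expectation), but the direct inverse-moment estimate is cleaner and avoids the intermediate pointwise statement with an uncontrolled random constant $C(\omega,E)$.

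Your ``principal obstacle'' paragraph overstates the difficulty. A Diophantine condition on $k$ is not needed: the phase analysis in the paper (Appendix~\ref{app:phases}, via \cite[Lemma~8.5]{KLS}) shows that the oscillating quadratic terms $\sum_j \esp[V_{\omega,j}^2]\cos(2\bar\theta_j)$ and $\sum_j \esp[V_{\omega,j}^2]\cos(4\bar\theta_j)$ are $o(\sum_j j^{-2\alpha})$ for every $k\notin\{\tfrac{\pi}{4},\tfrac{\pi}{2},\tfrac{3\pi}{4}\}$, hence for Lebesgue-a.e.\ $E\in I$, which suffices after integration. The dyadic block decomposition you propose is unnecessary; the sub-critical drift $\sum_{j\le n} a_j^2\sim n^{1-2\alpha}$ dominates the martingale and oscillatory errors directly, with constants uniform on compact $I\subset(-2,2)$.
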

Although the lack of ergodicity of the model induces the dependence of \eqref{eq:FM-bound}  on the base site $m$, it can be showed that the bound \eqref{DL} still implies pure point spectrum and finiteness of the moments such that for all $p>0$,
\begin{equation*}
	\esp\left(\sup_{t\in\R}
		\norm{
			|X|^{\frac{p}{2}}\e^{-itH_{\omega}}P_I(H_{\omega})\varphi_0}^2
	\right)<\infty,
\end{equation*}
for all $\varphi_0 \in l^2(\Z)$ with bounded support. A proof of these simple facts can be found in \cite{BMT01} in a related discrete model or in \cite{BMT02} in the continuum.
%
\bigskip

The proof of Theorem \ref{thm:dynamical-localization} in \cite{Si82} uses the Kunz-Souillard method (KSM) \cite{KS,DG}. 
In \cite{BMT01}, we prove localization for the one-dimensional Dirac operator in a sub-critical potential by means of the fractional moments method instead. This approach allowed us to greatly generalize the hypothesis required in \cite{Si82}. For instance, we can handle unbounded potentials under mild regularity assumptions on their law, a context which is out of the scope of the KSM. Our approach, which is discussed in Section \ref{sec:dyn-loc} below, can be easily adapted to the Anderson model providing an alternative proof of Theorem \ref{thm:dynamical-localization} under more general assumptions.
Furtheremore, in \cite{BMT02}, we proved the corresponding result for a continuum version of the model, a problem which was left open in \cite{DS} where the authors develop a continuum version of the KSM.

The analysis in \cite{BMT02} provides a control on the eigenfunctions of $H_{\omega,\lambda}$. Let $x_{\omega,E}=(x_{\omega,E,n})_n$ denote the eigenfunction of $H_{\omega}$ corresponding to the eigenvalue $E$. In \cite[Theorem 8.6]{KLS}, it is showed that
\begin{equation*}
	\lim_{n\to \infty} \frac{1}{n^{1-2\alpha}} \log \sqrt{|x_{\omega,E,n}|^2 + |x_{\omega,E,n-1}|^2} = -\frac{(1-2\alpha) \lambda^2}{2(4-E^2)}, \quad \p-\text{a.s.},
\end{equation*}
for almost every fixed $E\in(-2,2)$. In particular, this shows that for almost every $E\in(-2,2)$, $\p$-almost surely, there exists a constant $C_{\omega,E}$ such that
\begin{equation*}
	\sqrt{|x_{\omega,E,n}|^2 + |x_{\omega,E,n-1}|^2} \leq C_{\omega,E}\ \e^{-\frac{(1-2\alpha) \lambda^2}{2(4-E^2)}n^{1-2\alpha}}.
\end{equation*}
It is known that certain types of decay of eigenfunctions are closely related to dynamical localization \cite{DeRJLS1,DeRJLS2,GT}. Such criteria usually require a control on the localization centres of the eigenfunctions, uniformly in energy intervals. This information is missing in the above bound. In \cite{BMT02} (see also \cite{BMT01}), it is showed that:
\begin{proposition} \label{thm:bounds-eigenfunctions}
	Let $0<\alpha<\frac12$.
	For all compact energy interval $I\subset (-2,2)$, there exists two deterministic constants $c_1=c_1(I),\, c_2=c_2(I)$ and positive random quantities $c_{\omega}=c_{\omega}(I),\, C_{\omega}=C_{\omega}(I)$ such that
	\begin{eqnarray}\label{eq:SULE}
		c_{\omega} \, e^{-c_1 n^{1-2\alpha}} \leq \sqrt{|x_{\omega,E,n}|^2 + |x_{\omega,E,n-1}|^2} \leq C_{\omega} e^{-c_2 n^{1-2\alpha}}, \quad \p-\text{a.s.},
	\end{eqnarray}
	for all $E\in I\cap\sigma(H_{\omega,\lambda})$ and all $n\in\Z$.
\end{proposition}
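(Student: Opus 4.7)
The plan is to extract \eqref{eq:SULE} from a \emph{uniform-in-energy} control of the products of transfer matrices
\begin{equation*}
	A_n(E) = T_n(E)\cdots T_1(E),\qquad T_k(E)=\begin{pmatrix}E-\lambda a_k\omega_k & -1\\ 1 & 0\end{pmatrix},
\end{equation*}
combined with $\det A_n(E)=1$. Any formal solution of $(H_{\omega,\lambda}-E)x=0$ satisfies $(x_{n+1},x_n)^{\mathrm{T}}=A_n(E)(x_1,x_0)^{\mathrm{T}}$, so the two singular values $\sigma_n^+(E)\ge\sigma_n^-(E)=1/\sigma_n^+(E)$ of $A_n(E)$ simultaneously govern the fastest growth and the fastest decay available to $x$.

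First I would reproduce the pointwise stretched-exponential rate of \cite[Theorem 8.6]{KLS}. In the modified Prüfer coordinates $(R_n(E),\theta_n(E))$ of Section \ref{sec:transfer-matrices}, a second-order expansion of $\log R_n(E)^2$ in the small parameter $\lambda a_k=O(k^{-\alpha})$ produces a deterministic drift summing to $2\gamma(E)\,n^{1-2\alpha}$ with $\gamma(E)=\tfrac{(1-2\alpha)\lambda^2}{2(4-E^2)}$, plus a martingale part $M_n(E)$ whose increments are bounded by $Ca_k$. Azuma--Hoeffding then gives sub-Gaussian tails at scale $\bigl(\sum_{k\le n}a_k^2\bigr)^{1/2}=O(n^{(1-2\alpha)/2})$, which is strictly subdominant thanks to $\alpha<\tfrac12$, and consequently $\log\sigma_n^+(E)=\gamma(E)\,n^{1-2\alpha}+o(n^{1-2\alpha})$ for each fixed $E\in(-2,2)$, $\p$-almost surely.

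The crucial step is to upgrade this to a statement valid uniformly on $E\in I\cap\sigma(H_{\omega,\lambda})$, since the eigenvalues of $H_{\omega,\lambda}$ are themselves random. Two routes are available: the direct one combines the Lipschitz dependence of $E\mapsto A_n(E)$ on a sufficiently fine deterministic net of $I$ with the martingale concentration above and a Borel--Cantelli argument over $n\in\mathbb N$; the alternative, carried out in \cite{BMT01,BMT02}, invokes the fractional moment method to establish stretched-exponential decay of the finite-volume Green's function that is already uniform in $E\in I$, and then converts it into an eigenfunction bound via the standard relation between $|x_{\omega,E,n}|$ and Green's matrix elements. Either route produces deterministic constants $c_1>\sup_{E\in I}\gamma(E)$ and $c_2<\inf_{E\in I}\gamma(E)$ such that $\p$-almost surely, for all large $n$,
\begin{equation*}
	e^{-c_1 n^{1-2\alpha}}\;\le\;\sigma_n^-(E)\;\le\;\sigma_n^+(E)\;\le\;e^{c_1 n^{1-2\alpha}}\qquad\text{uniformly in }E\in I.
\end{equation*}
Any $E\in I\cap\sigma(H_{\omega,\lambda})$ carries an $\ell^2$ eigenfunction, so the vector $(x_1,x_0)^{\mathrm T}$ must lie in the contracting singular direction of $A_n(E)$ (otherwise it would pick up the expanding rate and fail to be summable), which gives the upper bound in \eqref{eq:SULE} with $c_2<\inf_{E\in I}\gamma(E)$; the lower bound is then immediate from $\|A_n(E)v\|\ge\sigma_n^-(E)\|v\|$ for every $v\neq 0$.

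The hardest part is precisely this uniformity step: $E\mapsto\log\|A_n(E)\|$ has a Lipschitz constant that grows in $n$, so a naive $\varepsilon$-net is too coarse, and one must balance the net size against the Gaussian tails of $M_n(E)$ (or, in the Green-function route, absorb a Wegner-type contribution). This is the point at which the sub-critical hypothesis $\alpha<\tfrac12$ is essential — it guarantees that the fluctuation scale $n^{(1-2\alpha)/2}$ stays strictly smaller than the principal rate $n^{1-2\alpha}$ — and it is the delicate ingredient carried out in \cite{BMT01,BMT02}.
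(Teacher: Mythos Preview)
The paper's own ``proof'' is a two-sentence sketch in Section~\ref{sec:dyn-loc}: the upper bound is declared standard via \cite[Theorem~9.22]{CFKS} (i.e.\ dynamical localization, Theorem~\ref{thm:dynamical-localization}, implies SULE-type eigenfunction decay), and the lower bound is said to follow ``along the lines of Lemma~\ref{thm:lower-bound-eigenfunctions}'', namely the inequality $\|X_n\|\ge\|\mathbf T_{\omega,n-1}(E)\|^{-1}$ together with the second-moment bound $\esp[\|\mathbf T_{\omega,n-1}(E)\|^2]\le C_I\,\e^{b_I n^{1-2\alpha}}$, Chebyshev, and Borel--Cantelli. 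Your route~(b) is precisely the paper's upper-bound mechanism, and your lower bound via $\sigma_n^-$ is the same inequality; on these points the approaches coincide.

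Your direct singular-value route for the \emph{upper} bound, however, has a genuine gap. The assertion that $(x_1,x_0)^{\mathrm T}$ ``must lie in the contracting singular direction of $A_n(E)$'' is ill-posed, since the right singular vectors $v_n^\pm$ vary with $n$. Decomposing $(x_1,x_0)^{\mathrm T}=a_n v_n^{+}+b_n v_n^{-}$, square-summability forces only $|a_n|\,\sigma_n^+(E)\to 0$, whereas the desired conclusion $\|A_n(x_1,x_0)^{\mathrm T}\|\le C_\omega\,\e^{-c_2 n^{1-2\alpha}}$ requires $|a_n|\le C\,\e^{-(c_1+c_2)n^{1-2\alpha}}$. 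This rate of alignment is exactly the Ruelle--Osceledec information encoded in condition~\eqref{eq:asymptotic-conditions} of Lemma~\ref{thm:eigendirection}, and it does \emph{not} follow from two-sided bounds on $\sigma_n^\pm$ together with $\ell^2$ membership: the one-step bound $|\vartheta_{n+1}-\vartheta_n|\lesssim \|T_{n+1}\|/(\|A_n\|\,\|A_{n+1}\|)\le C\,\e^{-2c_2 n^{1-2\alpha}}$ gives at best $|\vartheta_\infty-\vartheta_n|\lesssim\e^{-2c_2 n^{1-2\alpha}}$, and since $c_2<c_1$ this is too weak to absorb the factor $\sigma_n^+\le\e^{c_1 n^{1-2\alpha}}$. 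The paper avoids this difficulty altogether by extracting the upper bound from the eigenfunction correlator decay of Theorem~\ref{thm:dynamical-localization} rather than from transfer-matrix asymptotics.
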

 This asymptotics, although less precise on the exact rate of decay, is uniform in energy intervals. The upper bound \eqref{eq:SULE} can be seen as a form of the condition SULE  where the localization centres are all equal to $0$ (see \cite{DeRJLS1,GK1}, equation (2)).
 
 As a corollary of the lower bound above, we can show that certain stretched exponential moments diverge, a fact that characterizes in some sense the strength of the localization.
\begin{proposition}\label{thm:lower-bounds-DL}
	Let $0<\alpha<\frac12$ and $\lambda>0$.
	Let $I \subset (-2,2)$ be a compact interval contained in the interior of $\sigma_{pp}(H_{\omega,\lambda})$. Then, 
	for $\p$-almost every $\omega$, 
	\begin{eqnarray}
		\limsup_{t\to \infty} \left\| \exp\{|X|^{p}\}\ \e^{-it H_{\omega,\lambda}} \psi\right\|^2 = \infty,
	\end{eqnarray}
	for all $p>1-2\alpha$ and $\psi \in {\text Ran}P_I(H_{\omega,\lambda})$.
\end{proposition}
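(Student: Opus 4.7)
The plan is to decompose $\psi$ on the orthonormal eigenbasis supplied by the pure point spectrum in $I$, use the uniform lower bound of Proposition \ref{thm:bounds-eigenfunctions} to show that the moment of each individual eigenfunction blows up, and then promote this to the dynamical statement via a truncation and Cesaro averaging argument. Throughout, write $A := \exp\{|X|^p\}$.

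Since $I \subset \sigma_{pp}(H_{\omega,\lambda})$ by assumption, and since the eigenvalues of a half-line Jacobi matrix are all simple (standard Wronskian argument), the restriction of $H_{\omega,\lambda}$ to $\mathrm{Ran}\,P_I$ admits an orthonormal eigenbasis $(x_{\omega,E_j})_j$ with simple eigenvalues $E_j \in I$. Any $\psi \in \mathrm{Ran}\,P_I$ decomposes as $\psi = \sum_j c_j x_{\omega,E_j}$, and assuming $\psi \neq 0$ some $c_{j_0} \neq 0$. The first key step is to show that $\|A\, x_{\omega,E}\|^2 = +\infty$ for every $E \in I \cap \sigma(H_{\omega,\lambda})$. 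Indeed, the lower bound in \eqref{eq:SULE} guarantees that for every $n$ at least one of $|x_{\omega,E,n}|$ or $|x_{\omega,E,n-1}|$ is $\gtrsim e^{-c_1 n^{1-2\alpha}}$; pairing consecutive indices and keeping the surviving term of each pair gives
\begin{equation*}
\|A\, x_{\omega,E}\|^2 \;\geq\; \frac{c_\omega^2}{2} \sum_k e^{2(2k-1)^p - 2c_1 (2k)^{1-2\alpha}},
\end{equation*}
and the right-hand side diverges precisely because $p > 1-2\alpha$.

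The second key step is a truncation and time-averaging argument. Set $A_N := A\, \mathbf{1}_{\{|X|\leq N\}}$, which is bounded. Expanding $A_N\, e^{-itH_{\omega,\lambda}}\psi = \sum_j c_j\, e^{-itE_j} A_N x_{\omega,E_j}$ and using that the $E_j$ are all simple, Wiener's theorem yields
\begin{equation*}
\lim_{T\to\infty} \frac{1}{T} \int_0^T \|A_N\, e^{-itH_{\omega,\lambda}}\psi\|^2\, dt \;=\; \sum_j |c_j|^2 \|A_N x_{\omega,E_j}\|^2 \;\geq\; |c_{j_0}|^2 \|A_N x_{\omega,E_{j_0}}\|^2,
\end{equation*}
and monotone convergence in $N$ makes the right-hand side tend to $+\infty$. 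Since the limit of the Cesaro average is unchanged if one integrates on $[T_0,T]$ instead of $[0,T]$, for every $M>0$ a choice of $N$ with the Cesaro limit above $2M$ forces the existence of arbitrarily large $t$ with $\|A_N\, e^{-itH_{\omega,\lambda}}\psi\|^2 \geq M$; combined with $\|A\, e^{-itH_{\omega,\lambda}}\psi\|^2 \geq \|A_N\, e^{-itH_{\omega,\lambda}}\psi\|^2$, this yields $\limsup_t \|A\, e^{-itH_{\omega,\lambda}}\psi\|^2 \geq M$ for every $M$, hence the claim.

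The main obstacle is the unboundedness of $A$: the evolved state $e^{-itH_{\omega,\lambda}}\psi$ need not lie in the domain of $A$, so one cannot insert $A$ directly into the spectral expansion of $\psi$. The truncation $A_N$ circumvents this cleanly, at the modest cost of invoking Wiener's theorem for a time average instead of a direct pointwise identity.
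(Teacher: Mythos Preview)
Your proof is correct and follows essentially the same route as the paper: decompose on the eigenbasis, use the lower bound from Proposition~\ref{thm:bounds-eigenfunctions} to show each eigenfunction has infinite $\exp\{|X|^p\}$-moment, truncate the weight to a bounded operator, and extract the divergence via a Cesaro/Wiener time average. This is exactly the template of the proof of Proposition~\ref{thm:critical-delocalization}, with Proposition~\ref{thm:bounds-eigenfunctions} substituted for Lemma~\ref{thm:lower-bound-eigenfunctions}, which is precisely how the paper indicates the argument should go.
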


 
\section{Transfer Matrix Analysis}\label{sec:transfer-matrices}

Let $|E|\leq 2$ and consider the eigenvalue equation
\begin{eqnarray}
	x_{n+1} + x_{n-1} + V_{\omega,n} x_n&=& E x_n, \quad n\geq 1,
\end{eqnarray}
with some initial condition $x_0=a,\, x_1=b$.
The formal solution of this equation can be obtained via transfer matrices. Let 
$\displaystyle X_n = 
\begin{pmatrix}
	x_n \\ x_{n-1}
\end{pmatrix}$
for $n\geq 1$ where
$\displaystyle X_1 = 
\begin{pmatrix}
	b \\ a
\end{pmatrix}$. Then $X_{n+1}=T_{\omega,n} X_n$ where
\begin{eqnarray}
	T_{\omega,n} = T_{\omega,n}(E) = 
	\begin{pmatrix}
		E- V_{\omega,n} & -1\\
		1 & 0	
	\end{pmatrix}.
\end{eqnarray}
Iterating this relation yields 
$$X_n = T_{\omega,n-1} \cdots T_{\omega,2} T_{\omega,1} X_1.$$

The free system
\begin{eqnarray*}
	x_{n+1} + x_{n-1}&=& E x_n, \quad n\geq 1\\
	x_1 &=& Ex_0,
\end{eqnarray*}
with $x_0=a,\, x_1=b$,
has a basis of solutions given by $\varphi^+_n= \cos (kn)$ and $\varphi^-_n = \sin (kn)$
where $2\cos k = E$, which correspond to initial conditions 
$\begin{pmatrix}1\\ \cos k\end{pmatrix}$ and $\begin{pmatrix}0\\ \sin k\end{pmatrix}$
respectively. Once more, the general solutions 
$\displaystyle \Phi_n
=
\begin{pmatrix}
	\varphi_n \\ \varphi_{n-1}
\end{pmatrix}
$
can be given in terms of transfer matrices so that $\Phi_n=T^{n}\Phi_0$ where
\begin{equation*}
	T= 
	\begin{pmatrix}
		E& -1\\
		1 & 0	
	\end{pmatrix},
	\quad
	\Phi_0=
	\begin{pmatrix}
		a \\ 0
	\end{pmatrix}.
\end{equation*}

In the case of decaying potentials, $T$ corresponds to the asymptotic transfer matrix for vanishing potentials.
 
 We seek for a new basis in which the transfer matrix of the perturbed system is a perturbation of the identity. Hence, it is natural to express the system in the basis given by the free solutions. Let
 \begin{eqnarray}\label{eq:change-basis}
 	\Psi_n =
 	\begin{pmatrix}
 		\cos(kn) & \sin(kn)\\
 		\cos(k(n-1)) & \sin(k(n-1))
 	\end{pmatrix}\quad \mathrm{for}
 	\quad n\geq 1,
 \end{eqnarray}
 and define new coordinates $(Y_n)_n$ such that $X_n = \Psi_n Y_n$. This is usually known as the Pr\"ufer transform. The recurrence relation for this new representation is given by $Y_{n+1}= M_{\omega,n} Y_n$ where $M_{\omega,n} = \Psi_{n+1}^{-1} T_{\omega,a} \Psi_n$. Noting that $\Psi_{n+1}=T\Psi_n$, we factorize this in a convenient way as
\begin{eqnarray*}
	M_{\omega,n} = \Psi_{n+1}^{-1} (T_{\omega,n} T^{-1}) T \Psi_n = \Psi_{n+1}^{-1} (T_{\omega,n} T^{-1}) \Psi_{n+1}.
\end{eqnarray*}
A simple computation shows that $\displaystyle M_{\omega,n} = I + \frac{V_{\omega,n} }{\sin k} A_n$ where
\begin{eqnarray*}
	A_n =
	\begin{pmatrix}
		\cos(nk) \sin(nk) & \sin^2(nk)\\
		-\cos^2(nk) & - \cos(nk) \sin(nk)
	\end{pmatrix}.
\end{eqnarray*}
Summarizing, our new recurrence corresponds to
\begin{eqnarray}\label{eq:main-rec-vectors}
	Y_{n+1} = \left( I + \frac{V_{\omega,n} }{\sin k}\ A_n\right) Y_n.
\end{eqnarray}
Write $\displaystyle
Y_n=
\begin{pmatrix}
	y_n \\ y_{n-1}
\end{pmatrix}
$
and define new sequences $(\rho_n)_n$, $(R_n)_n$ and $(\theta_n)_n$ through the relation
\begin{eqnarray*}
	\rho_n = y_n + iy_{n-1} = R_n \e^{i\theta_n} \quad \mathrm{for}\quad  n\geq 1,
\end{eqnarray*}
where we adopt the convention $\theta_1 \in [0,2\pi)$ and $\theta_n-\theta_{n-1}\in (0,\pi)$ for $n\geq 2$. Note that all the above quantities depend on the disorder $\omega$ which we will hide from the notation. Note also that the original variables can be easily recovered from the Pr\"ufer variables:
\begin{eqnarray}\label{eq:prufer-developped}
	\begin{pmatrix}
		x_{n} \\ x_{n-1}
	\end{pmatrix}	 
	= R_n
	\begin{pmatrix}
		\cos (\bar{\theta}_n) \\
		\cos( \bar{\theta}_n-k)
	\end{pmatrix} \quad \mathrm{with}\quad \bar{\theta}_n = nk-\theta_n.
\end{eqnarray}
\begin{remark}
Let $\mathcal{F}_n$ be the $\sigma$-algebra generated by the random variables $(\omega_k)_{0\le k\le n} $ so that $\mathcal{F}_n=\sigma(\omega_0,\cdots,\omega_n)$.
We note that $T_{\omega,j} \in \mathcal{F}_n$ for $j\in\{0,\cdots, n\}$ and that $T_{\omega,j}$ is independent of $\mathcal{F}_n$ for $j>n$. In particular, $X_{n},\, Y_{n},\, \rho_{n},\, R_{n},\, \theta_n \in \mathcal{F}_{n-1}$ and are hence independent of $\omega_n$. Similar considerations hold for negative values of $n$. These simple facts will turn certain objects into martingales.
\end{remark}
We seek for a more convenient form of the recursion \eqref{eq:main-rec-vectors}
identifying $A_n$ as a linear transform on the complex plane such that
\begin{eqnarray*}
		A_n \e^{i\theta}=-i\cos(nk-\theta)\ \e^{ink}=-i\cos(nk-\theta)\ \e^{i(nk-\theta)}\ \e^{i\theta}.
\end{eqnarray*}
We can see that \eqref{eq:main-rec-vectors} takes the simple form
\begin{eqnarray}\label{eq:main-rec-complex}
	\rho_{n+1} = \left( 1 - i \frac{V_{\omega,n}}{\sin k} \cos(\bar{\theta}_n)\ \e^{i\bar\theta_n}\right) \rho_n,
\end{eqnarray}
where $\bar{\theta}_n = nk-\theta_n$.
The recurrence in terms of $(R_n, \theta_n )_n$ becomes
\begin{eqnarray}
	\label{eq:main-rec-with-phases}
	R_{n+1}^2 
	&=&
	R_{n} ^2 \left(
	1
	- \frac{ V_{\omega,n} }{\sin k} \sin (2 \bar \theta_n)
	+
	\frac{V_{\omega,n}^2}{\sin^2 k} \cos^2(\bar \theta_n)
	\right),
\end{eqnarray}
which can be iterated to yield
\begin{eqnarray}\label{eq:main-rec}
	R_{n+1}^2
	&=&
	R_1^2  \,
	\prod^n_{j=1}\left(
	1
	- \frac{ V_{\omega,j} }{\sin k} \sin (2\bar\theta_j)
	+
	\frac{ V_{\omega,j}^2}{\sin^2 k} \cos^2(\bar\theta_j)
	\right),
\end{eqnarray}
where here and below we always assume $R_1=1$.
One can show that the phases $(\bar\theta_n)_n$ satisfy the recursion
\begin{eqnarray}\label{eq:main-rec-phases}
	\tan(\bar{\theta}_{n+1}-k) = \tan(\bar{\theta}_{n}) + \frac{V_{\omega,n}}{\sin k}.
\end{eqnarray}
The recursion \eqref{eq:main-rec-with-phases} will be the starting point of our analysis but first, we have to show that the asymptotics of the $(X_n)_n$ and $(Y_n)_n$ systems are equivalent. We start noticing that
\begin{eqnarray*}
	\text{Tr} (\Psi^*_n \Psi_n) \leq 4,\quad \quad \text{det} (\Psi^*_n \Psi_n) = \sin^2 k.
\end{eqnarray*}
Hence, if $0<\lambda_1 < \lambda_2$ are the eigenvalues of $\Psi^*_n \Psi_n$ then
\begin{eqnarray*}
	\lambda_1+\lambda_2 \leq 4,\quad \quad \lambda_1 \lambda_2 = \sin^2 k,
\end{eqnarray*}
and thus
\begin{eqnarray*}
	\lambda_1 = \frac{\sin^2 k}{\lambda_2} = \frac{\sin^2 k}{\text{Tr} (\Psi^*_n \Psi_n)-\lambda_1} \geq \frac{\sin^2 k}{4}.
\end{eqnarray*}
Since $\| \Psi_n \|^2 \leq 4$, we obtain
\begin{eqnarray}
	\frac{\sin^2 k}{4}R_n^2 \leq \| X_n \|^2 \leq 4 R_n^2,
\end{eqnarray}
where we recall that $R_n=\| Y_n \|$.
It turns out that the variables $( R_n)_n$ also allow us to control the asymptotics of the transfer matrices. Let $\mathbf{T}_{\omega,n} = T_{\omega,n} \cdots T_{\omega,2} T_{\omega,1}$ so that $X_{n+1} = \mathbf{T}_{\omega,n} X_1$. 
We write $Y_n(\theta)$ when the recursion \eqref{eq:main-rec-vectors} is started from $\widehat Y_1 = \widehat \theta := \begin{pmatrix}\cos \theta\\ \sin \theta\end{pmatrix}$, and similarly for $R_n(\theta)$.
\begin{lemma} \label{thm:comparison}
For any pair of initial angles $\theta_1 \neq \theta_2$ and initial norm $\| Y_1\|=1$, there exists constants $c(\theta_1,\theta_2),\, C(\theta_1,\theta_2)$ such that
\begin{eqnarray}\nonumber
	c(\theta_1,\theta_2) \max\{ R_n(\theta_1),\, R_n(\theta_2)\}
	\leq
	\| \mathbf{T}_{\omega,n-1} \|
	\leq	
	C(\theta_1,\theta_2) \max\{ R_n(\theta_1),\, R_n(\theta_2)\},
\end{eqnarray}
for all $n\geq 1$.
\end{lemma}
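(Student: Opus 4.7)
My plan is to exploit the identity $X_n = \Psi_n Y_n$ together with the already-established sandwich $\tfrac{\sin^2 k}{4} R_n^2 \le \|X_n\|^2 \le 4 R_n^2$ in order to reduce the operator-norm estimate to a linear-algebra fact: since $\mathbf{T}_{\omega,n-1}$ is a $2\times 2$ matrix, its operator norm is pinched between its values on any fixed well-conditioned basis. The natural choice is $v_i := X_1(\theta_i) = \Psi_1 \widehat{\theta}_i$ for $i=1,2$, which spans $\R^2$ because $\Psi_1$ is invertible (its determinant equals $-\sin k \neq 0$ for $E \in (-2,2)$) and $\widehat{\theta}_1, \widehat{\theta}_2$ are independent as long as $\theta_1 \not\equiv \theta_2 \pmod \pi$, which is how I interpret the hypothesis.

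For the lower bound, I would use that $X_n(\theta_i) = \mathbf{T}_{\omega,n-1} v_i$, so
$$\|\mathbf{T}_{\omega,n-1}\| \;\ge\; \frac{\|X_n(\theta_i)\|}{\|v_i\|} \;\ge\; \frac{|\sin k|}{2\,\|v_i\|}\, R_n(\theta_i).$$
Taking the larger of the two bounds (over $i=1,2$) yields the lower bound with $c(\theta_1,\theta_2) = \min_i |\sin k|/(2\|v_i\|)$.

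For the upper bound, I would expand an arbitrary $v \in \R^2$ in the basis $\{v_1, v_2\}$ as $v = \alpha_1 v_1 + \alpha_2 v_2$, with $|\alpha_i| \le \kappa(\theta_1,\theta_2)\,\|v\|$ for a constant depending only on the basis. Applying $\mathbf{T}_{\omega,n-1}$ and using the upper comparison $\|X_n(\theta_i)\| \le 2 R_n(\theta_i)$ gives
$$\|\mathbf{T}_{\omega,n-1} v\| \;\le\; |\alpha_1|\,\|X_n(\theta_1)\| + |\alpha_2|\,\|X_n(\theta_2)\| \;\le\; 4\kappa(\theta_1,\theta_2)\max\{R_n(\theta_1), R_n(\theta_2)\}\,\|v\|,$$
and taking the supremum over unit $v$ produces the desired $C(\theta_1,\theta_2)$.

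The argument is essentially routine once the Pr\"ufer change of variables is unpacked; the only subtlety is that the conditioning constant $\kappa(\theta_1,\theta_2)$ blows up as $\theta_1 \to \theta_2 \pmod \pi$, which is precisely why the hypothesis $\theta_1 \neq \theta_2$ is needed. The resulting constants $c,\,C$ depend only on $\theta_1, \theta_2, k$ and not on $n$ or $\omega$, as required.
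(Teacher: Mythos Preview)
Your argument is correct. The lower bound is handled the same way in the paper (via $X_n=\Psi_n Y_n$ and the eigenvalue bounds on $\Psi_n^*\Psi_n$); your version is in fact a bit more careful, since you keep track of the factor $\|v_i\|=\|X_1(\theta_i)\|$, which the paper suppresses.

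For the upper bound the two proofs diverge. The paper defers to a separate lemma on unimodular $2\times2$ matrices (Lemma~\ref{thm:unimodular}): using the singular value decomposition one shows $\|A\widehat\theta\|\ge\|A\|\,|\cos(\theta-\theta_0)|$ for the top right singular direction $\widehat\theta_0$, and then minimizes $\max\{|\cos(\theta_1-\theta_0)|,|\cos(\theta_2-\theta_0)|\}$ over $\theta_0$ to obtain the explicit constant $\sin(|\theta_1-\theta_2|/2)^{-1}$. Your route is the more elementary one: expand an arbitrary vector in the basis $\{v_1,v_2\}$ and apply the triangle inequality, the constant being the norm of the change-of-basis map. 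Your argument does not use that $\mathbf T_{\omega,n-1}$ is unimodular at all (and, in fact, neither does the proof of Lemma~\ref{thm:unimodular}, despite its statement). The paper's approach buys a clean explicit constant in terms of the angle; yours buys simplicity and avoids the detour through a separate lemma. For the purposes of the present statement, where only \emph{some} $n$- and $\omega$-independent constants are needed, both are equally adequate.
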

\begin{proof}
	From the relation $\mathbf{T}_{\omega,n-1} X_1 = \Psi_n Y_n$, we obtain
	\begin{eqnarray}
		\| \mathbf T_{\omega,n-1} \|^2 \geq \| \Psi_n Y_n \|^2 \geq \frac{\sin^2 k}{4} \| Y_n\|^2.
	\end{eqnarray}
	The upper bound is more delicate and follows from a general result on unimodular matrices that we defer to Lemma \ref{thm:unimodular} in Appendix \ref{app:unimodular}.
\end{proof}

Starting from this point, we specialize to the random vanishing case. Recall that $(\omega_n)_{n\geq 0}$ is an i.i.d. sequence of bounded centered random variables defined on a probability space $(\Omega,\mathcal{F},\p)$ with an absolutely continuous distribution and denote by $\esp$ the expected value with respect to $\p$ so that $\esp(\omega_n)=0$ for all $n\geq 0$. We also assume that $\esp(\omega^2_n)=1$.

All the quantities defined above depend on $\omega$, $\lambda$ and $E$ but these parameters will be omitted from most of the notations whenever no confusion is possible.

 
\section{Super-critical Case: Absolutely Continuous Spectrum}\label{sec:super-critical-case}
This is based on a criterion of Last and Simon \cite{LS} that relates spectral properties to transfer matrices behavior. Let $\mathcal{T}_n(E)$ denote the product of transfer matrices associated to a bounded Schr\"odinger operator $H$ on $l^2([0,\infty))$ and an energy $E$.
\begin{theorem}\cite[Teorem 1.3]{LS}
	Suppose that
	\begin{eqnarray}\label{eq:LS-criterion}
		\liminf_n \int^b_a \| \mathcal{T}_n(E)\|^4dE <\infty.
	\end{eqnarray}
	Then, $(a,b)\subset \sigma(H)$ and the spectral measure is purely absolutely continuous on $(a,b)$.
\end{theorem}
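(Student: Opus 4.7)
The plan is to reduce the theorem to the Gilbert--Pearson subordinacy theory. The hypothesis is an integrated bound on the transfer matrix norms, and the route to a spectral conclusion has three stages: first extract pointwise information from the $L^4$ bound via Fatou, then use that information to rule out subordinate solutions, and finally invoke Gilbert--Pearson to translate the absence of subordinacy into purely absolutely continuous spectrum.

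\textbf{Step 1 (Fatou extraction).} Let $(n_j)_j$ be a subsequence realizing the liminf, so that $\sup_j \int_a^b \|\mathcal{T}_{n_j}(E)\|^4 \, dE < \infty$. Fatou's lemma yields
\[
  \int_a^b \liminf_{j\to\infty} \|\mathcal{T}_{n_j}(E)\|^4 \, dE \ <\ \infty,
\]
so for Lebesgue-a.e.\ $E \in (a,b)$ the integrand is finite, i.e.\ there is a further subsequence (depending on $E$) along which $\|\mathcal{T}_{n_j}(E)\|$ remains bounded. In particular $\liminf_n \|\mathcal{T}_n(E)\| < \infty$ a.e.\ on $(a,b)$.

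\textbf{Step 2 (No subordinate solutions).} Fix such an $E$ and a sub-subsequence $n_{j_k}\to\infty$ with $\|\mathcal{T}_{n_{j_k}}(E)\|\le C_E$. Since the $\mathcal{T}_n$ are unimodular, any basis of formal solutions $u_1,u_2$ of $Hu=Eu$ satisfies a two-sided bound on the discrete $\ell^2$-norms $\|u_i\|_{n_{j_k}}$, and neither of the ratios $\|u_1\|_{n_{j_k}}/\|u_2\|_{n_{j_k}}$ can tend to $0$. Using a simple comparison to extend these bounds from integer scales to general $L$ (one step of the recursion changes each norm by a controlled factor) together with the discrete subordinacy theorem of Jitomirskaya--Last, one concludes that no nontrivial solution at $E$ is subordinate in the Gilbert--Pearson sense. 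Hence the set $\mathcal N := \{E\in(a,b):\ \text{no subordinate solution at }E\}$ has full Lebesgue measure in $(a,b)$.

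\textbf{Step 3 (Spectral conclusion).} By the Gilbert--Pearson theorem, $\mathcal N$ is an essential support of the absolutely continuous part $\mu_{ac}$ of the spectral measure, while the singular part $\mu_{sing}$ is supported on its complement. Because $\mathcal N$ has full Lebesgue measure in $(a,b)$, this forces $\mu_{sing}\bigl((a,b)\bigr) = 0$ (purely a.c.\ spectrum on $(a,b)$), and the density $d\mu_{ac}/dE = \pi^{-1}\mathrm{Im}\, m(E+i0)$ is positive a.e.\ on $(a,b)$. Consequently every subinterval of $(a,b)$ carries positive spectral mass, giving $(a,b)\subseteq \sigma(H)$.

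The main obstacle is Step 2: the classical Gilbert--Pearson subordinacy criterion is phrased with $L^2$-norms over continuous intervals $[0,L]$, and transferring discrete boundedness of $\|\mathcal{T}_{n_{j_k}}(E)\|$ (along a sparse, $E$-dependent subsequence) to this continuous formulation is the technical heart of the argument. The specific exponent $4$ is calibrated to this step: one needs the $m$-function estimate $|m(E+i\varepsilon)|\asymp \|\mathcal{T}_{n(\varepsilon)}(E)\|^{-2}$ at the natural scale $n(\varepsilon)\sim \varepsilon^{-1}$ to survive an integration in $E$, and the square on passing from $|m|$ to its imaginary part/spectral density produces the power $4$ in the hypothesis.
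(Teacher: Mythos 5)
There is a genuine gap, and it sits at the junction of your Steps 1 and 3. By applying Fatou and keeping only the statement ``$\liminf_n\|\mathcal{T}_n(E)\|<\infty$ for Lebesgue-a.e.\ $E\in(a,b)$'', you discard exactly the quantitative content of the hypothesis that the theorem needs. Absence of subordinate solutions for Lebesgue-a.e.\ $E$ identifies an essential support of $\mu_{ac}$ and does give the positivity/support conclusion, but it cannot force $\mu_{sing}((a,b))=0$: the singular part is carried by a Lebesgue-null set, and Gilbert--Pearson only says that $\mu_{sing}$ is supported on the set of energies where the solution obeying the boundary condition is subordinate. To kill $\mu_{sing}$ on $(a,b)$ you need that set to be $\mu_{sing}$-null, i.e.\ information at every $E$ (or at least $\mu$-a.e.\ $E$), not a.e.\ with respect to Lebesgue measure. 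Indeed the implication your three steps would establish --- ``transfer matrices bounded (along a subsequence) for a.e.\ $E\in(a,b)$ $\Rightarrow$ purely a.c.\ on $(a,b)$'' --- is false: a Wigner--von Neumann-type decaying oscillating potential has bounded transfer matrices, hence no subordinate solution, at every energy except a single resonance energy, where an embedded eigenvalue sits inside the essential spectrum; so ``no subordinacy a.e.'' holds while the spectrum on the interval is not purely a.c. The $L^4$ integral bound is precisely what rules out such exceptional energies, and it must be used as an integral bound, not through its pointwise a.e.\ consequence. A secondary, also real, gap is in Step 2 itself: boundedness of $\|\mathcal{T}_{n_{j_k}}(E)\|$ along a sparse, $E$-dependent subsequence does not make the cumulative $\ell^2$-norms of two independent solutions comparable, since between the good scales the transfer matrices may be arbitrarily large; only $\sup_n\|\mathcal{T}_n(E)\|<\infty$ yields that conclusion by the easy argument you sketch.

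For comparison: the paper does not prove this statement at all --- it is quoted from Last--Simon --- and the proof in that reference is structured quite differently from yours. It keeps the integrated bound throughout: one combines two-sided estimates relating $\mathrm{Im}\,m(E+i\varepsilon)$ to $\|\mathcal{T}_{n(\varepsilon)}(E)\|^{\pm 2}$ at the scale $n(\varepsilon)\sim\varepsilon^{-1}$ with the hypothesis to obtain $\int_a^b\big(\mathrm{Im}\,m(E+i\varepsilon)\big)^{2}\,dE\le C$ uniformly along a sequence $\varepsilon_j\downarrow 0$, and then invokes the standard Herglotz-function fact that a uniform $L^p$ bound, $p>1$, on $\mathrm{Im}\,m(\cdot+i\varepsilon)$ over $(a,b)$ forces the boundary measure to be purely absolutely continuous there with density in $L^p$ (this is also why any power strictly larger than $2$ works, as the paper remarks). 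Your closing heuristic about the exponent $4$ and the $m$-function points in this direction; the repair is to run the whole argument through that uniform $L^2$ bound on $\mathrm{Im}\,m$ rather than through a.e.\ subordinacy.
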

 The criterion is valid for any power larger than 2. There is nothing special about the power 4 except that it makes the computations easier. 
 
 In the following, we write $\mathbf{T}_{\omega,n}(E)$, $Y_n(E)$ and $R_n(E)$ when we want to emphasise the dependence on the energy $E$. We write $V_{\omega,n}=\lambda a_n\omega_n$ for $n\geq 0$.
\begin{proof}[Proof of Theorem \ref{thm:main}, part 1 (super-critical case)]
Let $\theta_0$ be any initial angle and let $[a,b]\subset (-2,2)$. 
According to Lemma \ref{thm:comparison}, it is enough to show
\begin{eqnarray}\label{eq:bound-int-rho}
	\liminf_n \esp\left[  \int^b_a R_n^4(E)dE\right]<\infty.
\end{eqnarray}
Then, by Fatou's lemma,
\begin{eqnarray}
	\esp\left[ \liminf_n \int^b_a R_n^4(E)dE \right] \leq \liminf_n \esp\left[  \int^b_a R_n^4(E) dE\right]<\infty,
\end{eqnarray}
which implies that \eqref{eq:LS-criterion} holds almost surely. 
Squaring \eqref{eq:main-rec-with-phases}, we obtain
\begin{eqnarray}
	R_{n+1}^4 (E)= \left\{ 1 - 2\frac{V_{\omega,n}}{\sin k}\sin(2\bar{\theta}_n) + A_{\omega,n}(E) \right\} R_n^4(E),
\end{eqnarray}
where $A_{\omega,n}(E)$ collects all the terms of higher order in $V_{\omega,n}$. An inspection at those terms shows that there exists $c=c(a,b)\in(0,\infty)$ such that $|A_{\omega,n}(E)| \leq c n^{-2\alpha}$ for all $E\in [a,b]$.

Observing that $R_n$ is $\mathcal{F}_{n-1}$-measurable and bounded, we have
\begin{eqnarray}
	\esp\left[ R_{n+1}^4 (E) \Big{|} \mathcal{F}_{n-1} \right]
	= 
	\esp\left[ 
		1 - 2\frac{V_{\omega,n}}{\sin k}\sin(2\bar{\theta}_n) + A_{\omega,n}(E) 
	\Big{|} \mathcal{F}_{n-1} \right]
	R_n^4(E).
\end{eqnarray}
Now, as $\bar \theta_n$ is $\mathcal{F}_{n-1}$-measurable and $V_{\omega,n}$ is independent of $\mathcal{F}_{n-1}$ and centered,
\begin{eqnarray}
	\esp\left[V_{\omega,n}\sin(2\bar{\theta}_n) \Big{|} \mathcal{F}_{n-1} \right]
	=
	\sin(2\bar{\theta}_n) \esp\left[V_{\omega,n}\right]
	=
	0.
\end{eqnarray}
Collecting all the above observations, we conclude that
\begin{eqnarray}
	\esp\left[ R_{n+1}^4 (E) \Big{|} \mathcal{F}_{n-1} \right]
	\leq 
	\left\{ 1 + \frac{c}{n^{2\alpha}}\right\} \, R_n^4(E),
\end{eqnarray}
for all $E\in [a,b]$ and all $n\geq 1$. Integrating with respect to $\p$ and iterating, we obtain
\begin{eqnarray}
	\esp\left[ R^4_{n+1}(E) \right] 
	\leq 
	\left\{ 
		1 + \frac{c}{n^{2\alpha} }
	\right\} 
	\esp\left[ R^4_{n}(E) \right] 
	\leq 
	\prod^n_{j=1}
	\left\{ 
		1 + \frac{c}{j^{2\alpha} }
	\right\},
\end{eqnarray}
for all $E\in [a,b]$ and all $n\geq 1$. As $\displaystyle \sum_j j^{-2\alpha} < \infty$ for $\alpha>\tfrac12$, the product above is bounded uniformly in $n$ and $E\in[a,b]$.  This finishes the proof.
\end{proof}

\section{Critical Case: transition from Singular Continuous to Point Spectrum}\label{sec:critical-case}

\subsection{General scheme.}
To prove the absence of a.c. spectrum, we use the following criterion of Last and Simon. With the notations of the beginning of Section \ref{sec:super-critical-case}:
\begin{theorem}\cite[Theorem 1.2]{LS}\label{thm:criterion-no-ac}
	Suppose $\displaystyle\lim_{n\to\infty} \| \mathcal{T}_n(E)\| = \infty$ for a.e. $E\in[a,b]$. Then, $\mu_{ac}([a,b])=0$, where $\mu_{ac}$ is the absolutely continuous part of the spectral measure for $H$.
\end{theorem}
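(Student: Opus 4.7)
The plan is to relate the absolutely continuous part of the spectral measure of $H$ to the boundary values of the Weyl-Titchmarsh $m$-function, and then to show that these boundary values are suppressed wherever transfer matrices blow up. First, I would recall that for a bounded half-line Schr\"odinger operator $H$ on $\ell^2([0,\infty))$, the a.c.\ part of the canonical spectral measure $\mu$ (with cyclic vector $\delta_0$) has Lebesgue density
$$\rho_{ac}(E) \;=\; \frac{1}{\pi}\,\mathrm{Im}\, m(E+i0)$$
at Lebesgue-a.e.\ $E$, where $m$ is the Weyl $m$-function and the boundary value is taken non-tangentially. Consequently $\mu_{ac}([a,b]) = \pi^{-1}\int_a^b \mathrm{Im}\, m(E+i0)\,dE$, and the problem reduces to showing that $\mathrm{Im}\, m(E+i0)=0$ for Lebesgue-a.e.\ $E\in[a,b]$.

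The heart of the argument is a pointwise estimate of the form
$$\mathrm{Im}\, m(E+i\epsilon) \;\leq\; \frac{C}{\|\mathcal{T}_{N(\epsilon)}(E)\|^2}$$
valid for $E\in[a,b]$, $0<\epsilon<\epsilon_0$, and a suitable integer $N(\epsilon)\to\infty$ as $\epsilon\to 0^+$ (heuristically one would take $N(\epsilon)\sim 1/\epsilon$, the scale past which the imaginary part $i\epsilon$ damps the Weyl solution exponentially). The motivation is that $\mathrm{Im}\, m(E+i\epsilon)$ equals, up to multiplicative constants, $\|\psi_\epsilon\|_{\ell^2([0,\infty))}^{-2}$, where $\psi_\epsilon$ is the square-summable Weyl solution at energy $E+i\epsilon$ normalised at the origin. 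A variation-of-parameters argument treating $i\epsilon$ as a small perturbation on $[0,N(\epsilon)]$ then shows that $\psi_\epsilon$ tracks the real-energy solutions determined by $\mathcal{T}_n(E)$ on this range, which forces $\|\psi_\epsilon\|_{\ell^2}^2 \gtrsim \|\mathcal{T}_{N(\epsilon)}(E)\|^2$.

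With this estimate in hand, the conclusion is immediate. For $E\in[a,b]$ outside a Lebesgue-null set, the non-tangential boundary value $\mathrm{Im}\, m(E+i0)=\lim_{\epsilon\to 0^+}\mathrm{Im}\, m(E+i\epsilon)$ exists and equals $\pi\rho_{ac}(E)$, while the estimate gives
$$\mathrm{Im}\, m(E+i0) \;\leq\; \liminf_{\epsilon\to 0^+} \frac{C}{\|\mathcal{T}_{N(\epsilon)}(E)\|^2} \;=\; 0$$
by the hypothesis $\|\mathcal{T}_n(E)\|\to\infty$. Hence $\rho_{ac}\equiv 0$ Lebesgue-a.e.\ on $[a,b]$ and $\mu_{ac}([a,b])=0$.

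The hard part is clearly the quantitative $m$-function estimate above: choosing the correct scale $N(\epsilon)$ and controlling the difference between $\mathcal{T}_n(E+i\epsilon)$ and $\mathcal{T}_n(E)$ up to that scale, so that the Weyl solution at $E+i\epsilon$ can be genuinely compared to real-energy solutions. This is the Jitomirskaya-Last-type inequality that underlies the Last-Simon criterion; everything else in the proof is essentially formal once it is in place.
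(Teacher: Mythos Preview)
The paper does not prove this theorem at all; it is quoted from Last--Simon \cite{LS} and used as a black box, so there is no in-paper argument to compare against. Your overall strategy---reduce to $\mathrm{Im}\,m(E+i0)=0$ a.e.\ and control $m$ through transfer matrices---is the right shape, but the mechanism in your middle paragraph is inverted in two places. First, the Weyl identity reads $\mathrm{Im}\,m(E+i\epsilon)=\epsilon\,\|\psi_\epsilon\|_{\ell^2}^2$ for the $\ell^2$ solution $\psi_\epsilon$ normalised at the origin, not $\|\psi_\epsilon\|^{-2}$. Second, precisely because $\psi_\epsilon$ is square-summable, it cannot track the growing real-energy solution; on $[0,N(\epsilon)]$ it approximates the \emph{decaying} direction of $\mathcal T_n(E)$, so one expects $\|\psi_\epsilon\|$ to stay bounded, not to blow up like $\|\mathcal T_{N(\epsilon)}(E)\|$. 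Your two sign errors happen to cancel in the final inequality, but the argument as written does not hold.

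A correct route is Gilbert--Pearson subordinacy: once $\|\mathcal T_n(E)\|\to\infty$, one shows there is a subordinate solution at $E$, and then the subordinacy theorem forces $m(E+i0)$ to be either real or infinite, hence $\mathrm{Im}\,m(E+i0)=0$ in the sense relevant for the a.c.\ density. If you prefer the quantitative Jitomirskaya--Last inequality you allude to, note that it bounds $|m(E+i\epsilon)|$ by a \emph{ratio} of truncated $\ell^2$-norms of the two standard real-energy solutions over $[0,L(\epsilon)]$, with $L(\epsilon)$ determined implicitly by $2\epsilon\,\|u_1\|_{L}\|u_2\|_{L}=1$ rather than $L\sim 1/\epsilon$; the conclusion then comes from this ratio tending to $0$ or $\infty$, not from a bound of the form $C/\|\mathcal T_{N}\|^2$.
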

We will prove that, for all $\alpha\leq \frac12$, $\lambda\in\R$ and all $E\in(-2,2)$ corresponding to values of $k$ different from $\frac{\pi}{4}$, $\frac{\pi}{2}$ and $\frac{3\pi}{4}$, one has
\begin{eqnarray}
	\beta=\beta(E,\lambda):=\lim_{n\to\infty} \frac{\log \norm{\bT_{\omega,n}(E)}}{\sum^n_{j=1}j^{-2\alpha}} = \frac{\lambda^2}{2(4-E^2)},\quad \p-a.s.
\end{eqnarray}
which in particular implies that the norms diverge so that Theorem \ref{thm:criterion-no-ac} can be applied. This will be done by working the fine asymptotics of $(R_n)_n$.

Next, we will argue that, for all $\lambda$ and $E$ as above, there exists a random initial direction $\hat \vartheta_0$ such that
\begin{eqnarray}
	\lim_{n\to\infty} \frac{\log \|\bT_{\omega,n}(E) \hat\vartheta_0\|}{\log n} = -\beta,\quad \p-a.s.
\end{eqnarray}
This guaranties that the solution of the eigenvalue equation with initial conditions $x_0=\cos \vartheta_0$ and $x_1=\sin \vartheta_0$ satisfies
\begin{eqnarray}
		\lim_{n\to\infty} \frac{\log \sqrt{x^2_n+x^2_{n-1}}}{\log n} = -\beta,\quad \p-a.s.
\end{eqnarray}
Summarising, for almost every pair $(E,\omega)$ there is a unique decaying solution with rate of decay $n^{-\beta}$.
If $\beta>\frac12$, it is in $ l^2(\n)$. This holds if and only if $|\lambda|>\sqrt{4-E^2}$. This condition is always met if $|\lambda|\geq 2$. If $|\lambda|<2$, the condition fails once $|\lambda|\leq\sqrt{4-E^2}$  or, equivalently, if $|E|\leq\sqrt{4-\lambda^2}$. In this case, there is no $l^2$ solution. The result follows from the general theory of rank 1 perturbations (see, for instance, Section II in Simon's lecture notes \cite[Section II]{Si95}).
 
\subsection{Asymptotics of Transfer Matrices}

The following computation is valid in the critical and sub-critical region i.e. $\alpha\leq \frac12$. Recall the relation $E = 2\cos k$.
\begin{proposition}\label{thm:lyapuvov}
	Let $0<\alpha\leq\frac12$. Let $R_n(E,\theta_0)$ denote the recurrence \eqref{eq:main-rec-with-phases} corresponding to an energy $E$, an initial direction $\theta_0$ and $R_1=1$. Then, for all initial direction $\theta_0$ and $E\in (-2,2)$ corresponding to values of $k$ different from $\frac{\pi}{4}$, $\frac{\pi}{2}$ and $\frac{3\pi}{4}$, we have
	\begin{eqnarray}
		\lim_{n\to\infty} \frac{\log R_n(E,\theta_0)}{\sum^n_{j=1}j^{-2\alpha}} = \frac{\lambda^2}{8 \sin^2 k},\quad \p-a.s.
	\end{eqnarray}
\end{proposition}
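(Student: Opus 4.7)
The plan is to take the logarithm of \eqref{eq:main-rec-with-phases}, iterate, and identify a deterministic quadratic contribution of the claimed size while showing that every other piece is of smaller order than $s_n:=\sum_{j=1}^n j^{-2\alpha}$. Using $|V_{\omega,j}|=O(j^{-\alpha})$ and the second-order Taylor expansion of $\log(1+x)$, together with the identities $\cos^2\theta-\tfrac12\sin^2(2\theta)=\cos^2\theta\cos(2\theta)$ and $\cos^2\theta\cos(2\theta)=\tfrac14+\tfrac12\cos(2\theta)+\tfrac14\cos(4\theta)$, one obtains
\[
2\log R_{n+1}=\frac{\lambda^2}{4\sin^2 k}\sum_{j=1}^n a_j^2\omega_j^2+M_n+\mathcal{O}_n^{(2)}+\mathcal{O}_n^{(4)}+\mathcal{E}_n,
\]
where $M_n=-\frac{\lambda}{\sin k}\sum_{j\leq n}a_j\omega_j\sin(2\bar\theta_j)$, $\mathcal{O}_n^{(m)}=\frac{\lambda^2 c_m}{\sin^2 k}\sum_{j\leq n}a_j^2\omega_j^2\cos(m\bar\theta_j)$ for $m\in\{2,4\}$ with $c_m\in\{\tfrac12,\tfrac14\}$, and $|\mathcal{E}_n|=O\big(\sum_j j^{-3\alpha}\big)$. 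Since $s_n\to\infty$ for all $\alpha\leq\tfrac12$ and $\sum_{j\leq n}j^{-3\alpha}/s_n=O(n^{-\alpha})$, one has $\mathcal{E}_n=o(s_n)$. It then suffices to show that the quadratic deterministic sum divided by $s_n$ tends to $1$ almost surely and that each of the three stochastic pieces is $o(s_n)$ almost surely.

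For the leading sum, $\sum_{j\leq n}a_j^2/s_n\to 1$ by Cesàro (using $a_j^2/j^{-2\alpha}\to 1$), while the fluctuation $\sum_{j\leq n}a_j^2(\omega_j^2-1)$ is dispatched by Kolmogorov's SLLN for independent bounded centered variables, using $\sum_j\mathrm{Var}(a_j^2(\omega_j^2-1))/s_j^2=O\big(\sum j^{-2}\big)<\infty$ for all $\alpha\leq\tfrac12$. For the martingale $M_n$, the remark preceding \eqref{eq:main-rec-complex} ensures $\bar\theta_j\in\mathcal{F}_{j-1}$ while $\omega_j$ is centered and independent of $\mathcal{F}_{j-1}$, so $M_n$ is an $\mathcal{F}_n$-martingale with bracket $\langle M\rangle_n=O(s_n)$; Chow's martingale SLLN (equivalently, Doob's $L^2$ maximal inequality combined with Kronecker's lemma) yields $M_n/s_n\to 0$ almost surely.

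The main obstacle is the oscillatory sums $\mathcal{O}_n^{(m)}$. Splitting $\omega_j^2=1+(\omega_j^2-1)$, the fluctuation piece $\sum_j a_j^2(\omega_j^2-1)\cos(m\bar\theta_j)$ is again a martingale, now with bracket $O\big(\sum a_j^4\big)$, and is handled as above. For the purely deterministic oscillatory piece $\sum_{j\leq n}a_j^2\cos(m\bar\theta_j)$, I would introduce $\phi_j:=\bar\theta_j-jk$ and linearize \eqref{eq:main-rec-phases} to obtain $\phi_{j+1}-\phi_j=V_{\omega,j}\cos^2(\bar\theta_j)/\sin k+O(a_j^2)=O(a_j)$, so that $\eta_j^{(m)}:=e^{im\phi_j}$ satisfies $|\eta_{j+1}^{(m)}-\eta_j^{(m)}|=O(a_j)$. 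The non-resonance hypothesis $k\notin\{\pi/4,\pi/2,3\pi/4\}$ guarantees that the partial sums $\sum_{l\leq n}e^{imlk}$ remain uniformly bounded for the relevant multiples (including auxiliary ones appearing if one keeps one more term in the expansion). Abel summation applied to $\sum_j a_j^2\eta_j^{(m)}e^{imjk}$ then yields a bound of order $a_n^2+\sum_j\big(a_j^3+|a_{j+1}^2-a_j^2|\big)=O\big(n^{1-3\alpha}\vee 1\big)=o(s_n)$, as required.

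Collecting the four estimates and dividing by $s_n$ gives $2\log R_{n+1}/s_n\to\lambda^2/(4\sin^2 k)$ almost surely, which is the claimed limit. Independence from $\theta_0$ is automatic: the initial angle enters only through an $O(1)$ initial condition that is absorbed after normalization by the diverging $s_n$.
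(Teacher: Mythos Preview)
Your proof is correct and uses the same decomposition as the paper: logarithm, second-order Taylor expansion, isolation of the deterministic quadratic term, and control of martingale and oscillatory remainders. The execution differs in two places. For the martingale pieces the paper uses Azuma's inequality (packaged as Lemma~\ref{thm:martingales}) while you appeal to Kolmogorov/Chow SLLN; both are standard. The real difference is in the deterministic oscillatory sum $\sum_{j\le n}a_j^2\cos(m\bar\theta_j)$: the paper (Appendix~\ref{app:phases}) invokes \cite[Lemma~8.5]{KLS}, which requires a specific sequence of block lengths $q_l\to\infty$ and a careful splitting into blocks. Your Abel summation---writing $\cos(m\bar\theta_j)=\mathrm{Re}\big(\eta_j^{(m)}e^{imjk}\big)$ with $|\eta_{j+1}^{(m)}-\eta_j^{(m)}|=O(a_j)$ from \eqref{eq:small-phases} and using uniform boundedness of $\sum_{l\le n}e^{imlk}$---is shorter and more transparent; it in fact only requires $e^{imk}\neq 1$ for $m=2,4$, i.e.\ $k\neq\pi/2$, so the exclusion of $k=\pi/4,3\pi/4$ seems to be an artifact of the KLS route (which needs $mk\notin\pi\Z$). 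Two small points: the claim $\sum_{j\le n}j^{-3\alpha}/s_n=O(n^{-\alpha})$ is only literally true for $\alpha\le 1/3$ (for larger $\alpha$ the numerator is bounded), though the conclusion $o(1)$ holds in all cases; and your closing sentence about $\theta_0$ entering ``only through an $O(1)$ initial condition'' is not quite right---the phases $\bar\theta_j$ depend on $\theta_0$ throughout---but since all your estimates are uniform in the phases, the argument is indeed valid for every $\theta_0$.
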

\begin{proof}
	We take $a_0=1$ and $a_n=n^{-\alpha}$ to simplify the presentation and
	write again $V_{\omega,n}=\lambda n^{-\alpha}\omega_n$ for $n\geq 0$.
	Remember the recursion,
	\begin{eqnarray}
	R_n^2 
	&=&
	\prod^{n-1}_{j=1}\left\{ 
	1
	- \frac{V_{\omega,j }}{\sin k} \sin (2\bar\theta_j)
	+
	\frac{V_{\omega,j}^2}{\sin^2 k} \cos^2(\bar\theta_j)
	\right\}.
\end{eqnarray}
Using the Taylor expansion $\log(1+\varepsilon)= \varepsilon -\frac{\varepsilon^2}{2}+O(\varepsilon^3)$, we obtain
\begin{eqnarray}
	&&\log  \prod^{n-1}_{j=1}\left\{ 
	1
	- \frac{V_{\omega,j} }{\sin k} \sin (2\bar\theta_j)
	+
	\frac{V_{\omega,j}^2}{\sin^2 k} \cos^2(\bar\theta_j)
	\right\}\\
	&=&
	\sum^{n-1}_{j=1}
	\left\{ 
	- \frac{V_{\omega,j} }{\sin k} \sin (2\bar\theta_j)
	+
	\frac{V_{\omega,j}^2}{\sin^2 k} \left(\cos^2(\bar\theta_j)-\frac12 \sin^2 (2\bar\theta_j)\right)
	+
	O(V^3_{\omega,j})
	\right\}\\
	&=&
	\sum^{n-1}_{j=1}
	\left\{ 
	\frac{\esp[V_{\omega,j}^2]}{4\sin^2 k}
	- \frac{V_{\omega,j} }{\sin k} \sin (2\bar\theta_j)
	+
	\frac{V_{\omega,j}^2-\esp[V^2_{\omega,j}]}{4\sin^2 k} \right.\\
	&&
	\phantom{lospollitos}
	+
	\frac{V_{\omega,j}^2-\esp[V_{\omega,j}^2]}{\sin^2 k}\left( \frac{1}{2}\cos 2 \bar \theta_j + \frac{1}{4} \cos 4 \bar \theta_j\right)\\
	&&\label{eq:martingale-decomposition}
	\left.
	\phantom{lospollitos}
	+
	\frac{\esp[ V_{\omega,j}^2]}{\sin^2 k}\left( \frac{1}{2}\cos 2 \bar \theta_j + \frac{1}{4} \cos 4 \bar \theta_j\right)
	+
	O(V^3_{\omega,j})
	\right\}.
\end{eqnarray}
The main contribution comes from the first term above. The tree next terms are martingales with respect to $\mathcal{F}_n$ and are shown to be $\displaystyle o\left(\sum^{n-1}_{j=1} j^{-2\alpha} \right)$ using martingale arguments (Lemma \ref{thm:martingales} and Remark \ref{rk:martingale}).
The fifth term is more delicate and will be treated by ad-hoc methods in Appendix \ref{app:phases}.
Finally, the contribution from the $O(V^3_{\omega,j})$ term is $\displaystyle o\left(\sum^{n-1}_{j=1} j^{-2\alpha} \right)$ as well.  
\end{proof}
The following lemma is the key to estimate the martingale terms. It corresponds to \cite[Lemma 8.4]{KLS} but this shorter proof is taken from \cite{BMT01}.
\begin{lemma}\label{thm:martingales}
	Let $(X_j)_j$ be i.i.d. random variables with $\esp[X_n]=0$, let $\mathcal{F}_n=\sigma(X_1,\cdots, \, X_n)$ and let $Y_n \in \mathcal{F}_{n-1}$ for $j\geq 1$.
	Let $\gamma>0$ and define
	\begin{eqnarray}
		M_n = \sum^n_{j=1} \frac{X_j Y_j}{j^{\gamma}} \quad \text{and} \quad s_n = \sum^n_{j=1} \frac{1}{j^{2\gamma}}.
	\end{eqnarray}
	Assume $|X_n|,\, |Y_n|\leq 1$. Then, $(M_n)_n$ is an $(\mathcal{F}_n)_n$-martingale and
	\begin{enumerate}[label=\alph*.-]
		\item[(i)] For $\gamma \leq \frac12$ and all $\varepsilon>0$,
			\begin{eqnarray}
				\lim_{n\to\infty} s_n^{-\frac{1+\varepsilon}{2}}M_n= 0,\quad \quad \p-a.s.
			\end{eqnarray}
	
		\vspace{1ex}
		
		\item[(ii)] For $\gamma>\frac12$, $(M_n)_n$ converges $\p$-almost surely to a finite (random) limit $M_{\infty}$ and, for all $\kappa<\gamma - \frac12$, we have
		\begin{eqnarray}
			\lim_{n\to\infty} n^{\kappa} \left( M_{\infty}-M_n\right) = 0,\quad \p-a.s.
		\end{eqnarray}
	\end{enumerate}		
\end{lemma}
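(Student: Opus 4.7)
The martingale property is immediate: since $X_j$ is independent of $\mathcal{F}_{j-1}$ with $\esp[X_j]=0$ while $Y_j$ is $\mathcal{F}_{j-1}$-measurable, one has $\esp[X_j Y_j/j^\gamma \mid \mathcal{F}_{j-1}] = 0$. The orthogonality of increments also produces at once the variance bound $\esp[M_n^2] = \sum_{j=1}^n \esp[(X_j Y_j)^2]/j^{2\gamma} \le s_n$, which will be the workhorse for both parts.

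For part (ii), where $\gamma > 1/2$, the series $s_\infty = \sum_j j^{-2\gamma}$ is finite, so $(M_n)_n$ is bounded in $L^2$ and the standard $L^2$-martingale convergence theorem yields the $\p$-a.s. limit $M_\infty$. To upgrade to the rate, my plan is to apply the same convergence theorem to the auxiliary martingale
\begin{equation*}
	N_n := \sum_{j=1}^n \frac{X_j Y_j}{j^{\gamma-\kappa}},
\end{equation*}
whose variances are still summable thanks to the strict inequality $\gamma - \kappa > 1/2$, giving $N_n \to N_\infty$ $\p$-a.s. Abel summation then delivers
\begin{equation*}
	M_\infty - M_n \;=\; \sum_{j=n+1}^\infty \frac{N_j - N_{j-1}}{j^\kappa} \;=\; -\frac{N_n}{(n+1)^\kappa} + \sum_{j=n+1}^\infty N_j \Bigl(\frac{1}{j^\kappa} - \frac{1}{(j+1)^\kappa}\Bigr),
\end{equation*}
where the boundary term at $\infty$ vanishes because $|N_N|/N^\kappa \to 0$. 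Splitting $N_j = N_\infty + (N_j - N_\infty)$ in the right-hand sum and using the telescoping identity $\sum_{j=n+1}^\infty (j^{-\kappa} - (j+1)^{-\kappa}) = (n+1)^{-\kappa}$, the $N_\infty$-contribution cancels exactly the first term after multiplication by $n^\kappa$, while the residual is controlled by $\sup_{j > n} |N_j - N_\infty| \to 0$. This gives $n^\kappa(M_\infty - M_n) \to 0$ a.s.

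For part (i), where $\gamma \le 1/2$, the sequence $s_n$ diverges and the natural tool is Kronecker's lemma. Setting $a_j := s_j^{(1+\varepsilon)/2}$, which is non-decreasing with $a_j \to \infty$, I would consider the transformed martingale
\begin{equation*}
	\tilde M_n := \sum_{j=1}^n \frac{X_j Y_j}{j^\gamma\, a_j}.
\end{equation*}
Orthogonality of its increments gives $\esp[\tilde M_n^2] \le \sum_{j=1}^n 1/(j^{2\gamma}\, s_j^{1+\varepsilon})$. Using the asymptotics $s_j \asymp j^{1-2\gamma}$ for $\gamma < 1/2$ (resp.\ $s_j \asymp \log j$ for $\gamma = 1/2$), the summand is of order $j^{-1-\varepsilon(1-2\gamma)}$ (resp.\ $(j(\log j)^{1+\varepsilon})^{-1}$), which is summable in both regimes. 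Hence $\tilde M_n$ is $L^2$-bounded, converges $\p$-a.s., and Kronecker's lemma applied to the weights $a_j$ yields $M_n/a_n \to 0$ $\p$-a.s.

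The most delicate step I anticipate is the telescoping cancellation in part (ii): the leading order of $N_n/(n+1)^\kappa$ must be shown to cancel the leading term of the Abel sum, leaving only fluctuations that are genuinely $o(1)$ rather than only $O(1)$. The regime split $\gamma < 1/2$ versus $\gamma = 1/2$ in part (i) is by comparison a harmless explicit computation, needed only to verify summability of the transformed variances in the borderline logarithmic case.
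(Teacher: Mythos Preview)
Your argument is correct, and it takes a genuinely different route from the paper. The paper proves both parts via Azuma's concentration inequality for bounded-increment martingales together with Borel--Cantelli: for (i) it bounds $\p[|M_n|\ge s_n^{(1+\varepsilon')/2}]$ directly, and for (ii) it applies Azuma to the tail martingale $(M_{n+k}-M_n)_k$, lets $k\to\infty$ via Fatou, and again concludes by Borel--Cantelli. Your approach instead stays entirely within the $L^2$ theory: an auxiliary $L^2$-bounded martingale plus Kronecker's lemma for (i), and an auxiliary $L^2$-bounded martingale plus Abel summation for (ii). One small wording point in your part (ii): the $N_\infty$-contribution and the boundary term $-N_n/(n+1)^\kappa$ do not cancel exactly but combine to $(N_\infty-N_n)/(n+1)^\kappa$, which multiplied by $n^\kappa$ is $o(1)$ precisely because $N_n\to N_\infty$; this is the cancellation you need, and it works. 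What your approach buys is that it only uses second moments of the increments, so it extends verbatim to $X_n,Y_n$ that are merely bounded in $L^2$ rather than almost surely---a point the paper acknowledges would require ``some extra effort'' within the Azuma framework. Conversely, the paper's route yields quantitative exponential tail bounds along the way, which are stronger than needed for the lemma but could be useful if one wanted uniform-in-$E$ or large-deviation refinements.
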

\begin{proof}
	The reader can consult the book \cite{Durrett} for the general properties of martingales used below. 
	The sequence $(M_n)_n$ is a martingale thanks to our hypothesis on $(X_n)_n$ and $(Y_n)_n$: indeed, since $M_n,\, Y_n \in \mathcal{F}_n$ and $X_{n+1}$ is independent of $\mathcal{F}_n$ and centered, we have
	\begin{eqnarray}
		\esp[M_{n+1} | \mathcal{F}_n]
		&=&
		\esp\left[\frac{X_{n+1}Y_{n+1}}{(n+1)^{\gamma}}  + M_{n} \Big{|}\mathcal{F}_n\right]\\
		&=&
		\esp[X_{n+1}]\frac{Y_{n+1}}{(n+1)^{\gamma}} + M_n = M_n.
\end{eqnarray}		
	Let $\gamma\leq \frac12$. We use Azuma's inequality \cite{Azuma}: let $(M_n)_n$ be a martingale such that $|M_n-M_{n-1}|\leq c_n$ for all $n\geq 1$. Then,
	\begin{eqnarray}
		\p\left[ |M_n-M_0| \geq t\right] \leq 2 \exp\left\{ - \frac{t^2}{2 \sum^n_{j=1} c_j^2}\right\}.
	\end{eqnarray}
	In our case, $M_0=0$, $c_j = 2 j^{-\gamma}$, and taking $t=s_n^{\frac{1+\varepsilon'}{2}}$ for $0<\varepsilon'<\varepsilon$, we obtain
	\begin{eqnarray}
		\p\left[ |M_n| \geq s_n^{\frac{1+\varepsilon'}{2}}\right] \leq 2 \exp\left\{ - C n^{\varepsilon'}\right\},
	\end{eqnarray}
	for some $C>0$. The claim $a.-$ then follows from Borel-Cantelli's lemma.
	
	Now, let $\gamma>\frac12$. Noticing that, for $k<l$,
	\begin{eqnarray}
		\esp[X_kY_k X_l Y_l] 
		= \esp[\esp[X_kY_k X_l Y_l| \mathcal{F}_{l-1}]]
		= \esp[X_kY_k X_l \esp[Y_l| \mathcal{F}_{l-1}]]
		=0,
	\end{eqnarray}
	we have
	\begin{eqnarray}
		\sup_n\esp[M_n^2] = \sup_n \sum^n_{j=1} \frac{\esp[X_j^2 Y_j^2]}{j^{2\gamma}} \leq \sum_{j\geq 1} \frac{1}{j^{2\gamma}}<\infty.
	\end{eqnarray}
	Hence, $(M_n)_n$ is an $L^2$-martingale and converges, i.e., there exists a random variable $M_{\infty}$ such that $\lim_{n\to\infty}M_n=M_{\infty}$, $\p$-a.s.. Finally, applying Azuma's inequality to the martingale $(M_{n+k}-M_n)_{k\geq0}$, we obtain
	\begin{eqnarray}
		\p\left[ n^{\kappa}\left| M_{n+k}-M_n\right| \geq 1\right]\leq 2 \exp\left\{ - c n^{2(\gamma-\frac12 -\kappa)}\right\},
	\end{eqnarray}
	for all $k\geq0$.
	Choosing $\kappa<\gamma-\frac12$, the last claim follows from Fatou's lemma, the convergence of $(M_n)_n$ and Borel-Cantelli.
\end{proof}
\begin{remark}\label{rk:martingale}
	
	The martingale property for the second, third and fourth terms in decomposition \eqref{eq:martingale-decomposition} follows from Lemma \ref{thm:martingales}, recalling that $\bar{\theta}_j \in \mathcal{F}_n$ for all $j\leq n+1$.
	
	All the martingale terms in Proposition \ref{thm:lyapuvov} are seen to be $o\left(\sum^n_{j=1} j^{-2\alpha }\right)$ by taking $\gamma=\alpha$ in Lemma \ref{thm:martingales} for the first one and $\beta=2\alpha$ for the others. The lemma actually guaranties that they are much smaller.
	
	There is still room in Azuma's inequality to allow the support of the random variables $(X_n)_n$ to grow with $n$. Unbounded random variables could be handled under some moment assumptions with some extra effort.
\end{remark}
 
\subsection{Control of Generalized Eigenfunctions}
The next proposition provides $l^2$ eigenfunctions in the proper region.
For an angle $\theta$, we write $\widehat\theta=(\cos \theta, \sin \theta)$.
\begin{proposition}\label{thm:decay-eigenfunctions}
	Let $\alpha=\frac12$ and $k\neq \frac{\pi}{4},\, \frac{2\pi}{4},\, \frac{3\pi}{4}$. Then, for $\p$-almost every $\omega$, there exists an initial angle $\vartheta_0 = \vartheta_0(\omega)$ such that
	\begin{eqnarray}
		\lim_{n\to\infty} \frac{\log \| \mathbf{T}_{\omega,n}(E) \widehat\vartheta_0\|}{\sum^n_{j=1}j^{-2\alpha}} = -\frac{\lambda^2}{8 \sin^2 k},\quad \p-a.s.
	\end{eqnarray}
\end{proposition}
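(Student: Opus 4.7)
The plan is to exploit the unimodularity of $\mathbf{T}_{\omega,n}(E)$ to locate its contracting singular direction at each $n$ and take a limit in projective space. Set $R_n := \|\mathbf{T}_{\omega,n}(E)\|$ and $\beta := \lambda^2/(8\sin^2 k)$. Since $\mathbf{T}_{\omega,n}(E) \in \mathrm{SL}_2(\mathbb{R})$, its singular values are $R_n$ and $R_n^{-1}$, and Proposition \ref{thm:lyapuvov} combined with Lemma \ref{thm:comparison} yields $\log R_n / s_n \to \beta$ almost surely. Let $\widehat{v}_n$ denote the unit right-singular vector associated with $R_n^{-1}$, so that $\|\mathbf{T}_{\omega,n}(E)\widehat{v}_n\| = R_n^{-1}$. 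The candidate is $\widehat{\vartheta}_0 := \lim_n \widehat{v}_n$ in the projective line.

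The key one-step estimate uses the quadratic identity
\[
\|\mathbf{T}_{\omega,n}(E) v\|^2 \;=\; R_n^{-2}\cos^2\theta + R_n^2 \sin^2\theta
\]
valid for any unit vector $v = \cos\theta\,\widehat{v}_n + \sin\theta\,\widehat{v}_n^{\perp}$. Applied to $v = \widehat{v}_{n+1}$, together with the identity $\mathbf{T}_{\omega,n}(E)\widehat{v}_{n+1} = \tau_{n+1}^{-1}\mathbf{T}_{\omega,n+1}(E)\widehat{v}_{n+1}$ and the uniform boundedness of $\tau_{n+1}^{\pm 1}$ (thanks to the bounded potential), one obtains $\|\mathbf{T}_{\omega,n}(E)\widehat{v}_{n+1}\| = \Theta(R_n^{-1})$, hence the one-step angular bound
\[
|\sin\theta_n| = O(R_n^{-2}) = O(n^{-2\beta}), \qquad \theta_n := \angle(\widehat{v}_n,\, \widehat{v}_{n+1}).
\]
The same identity applied to $v = \widehat{\vartheta}_0$, with residual angle $\theta_n^{\ast} := \angle(\widehat{\vartheta}_0, \widehat{v}_n)$, collapses to $\|\mathbf{T}_{\omega,n}(E)\widehat{\vartheta}_0\|^2 = R_n^{-2}(1+o(1))$ as soon as one has the \emph{sharp} residual estimate $|\sin\theta_n^{\ast}| \leq R_n^{-2+o(1)}$; the claimed limit $\log \|\mathbf{T}_{\omega,n}(E)\widehat{\vartheta}_0\|/s_n \to -\beta$ then follows at once.

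The main technical obstacle is producing this sharp residual bound. A naive telescoping of the one-step estimate yields only $|\sin\theta_n^{\ast}| \leq \sum_{k\geq n} O(R_k^{-2})$, which equals $O(n^{1-2\beta})$ when $\beta > 1/2$ (still a factor of $n$ larger than required) and is not even summable when $\beta \leq 1/2$. To close this gap I would exploit cancellations in the angular dynamics, either (i) by deriving an explicit recursion for the angle of $\widehat{v}_n$ from the Pr\"ufer phase relation \eqref{eq:main-rec-phases} and splitting it into a drift and a martingale part that can be handled by Lemma \ref{thm:martingales}, benefiting from the centering of $\omega_{n+1}$; or (ii) by constructing $\widehat{\vartheta}_0$ via finite-volume Dirichlet approximants $\widehat{\vartheta}_0^{(N)}$, defined as the unique initial direction for which $\mathbf{T}_{\omega,N}(E)\widehat{\vartheta}_0^{(N)}$ is parallel to $e_1$, and estimating $|\widehat{\vartheta}_0^{(N)} - \widehat{\vartheta}_0^{(M)}|$ through Wronskian identities applied to the associated solutions on $[0,N]$. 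The latter is the classical Weyl--Titchmarsh route and yields the subordinate solution whose decay rate is read off directly from the asymptotics of $R_n$.
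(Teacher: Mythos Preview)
Your framework is sound and you have correctly located the obstruction: the naive telescoping of the one-step angular bound $|\sin\theta_n|=O(R_n^{-2})$ loses exactly one power of $n$, which is fatal for $\beta\le 1/2$ and still insufficient for $\beta>1/2$. However, neither of your proposed closures is the paper's route, and option (i) as stated is not really workable: the Pr\"ufer recursion \eqref{eq:main-rec-phases} governs the phase of the \emph{solution} $Y_n(\theta)$ for a fixed initial $\theta$, whereas the contracting singular direction $\widehat v_n$ is an \emph{initial} direction determined by the whole product $\mathbf T_{\omega,n}$; singular vectors do not transform simply under one-step multiplication, so there is no recursion for the angle of $\widehat v_n$ to split into drift plus martingale.

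The paper's device (Lemma \ref{thm:eigendirection}) is to replace the singular angle $\vartheta_n$ by the ratio $r_n := R_n^{(1)}/R_n^{(2)}$ of Pr\"ufer radii for the two \emph{fixed} initial directions $(1,0)^T$ and $(0,1)^T$; one checks $r_n\simeq|\tan\vartheta_n|$, so convergence of $r_n$ at rate $n^{-2\beta}$ is equivalent to the sharp residual bound you need. The gain in passing to $r_n$ is that $\log r_n=\log R_n^{(1)}-\log R_n^{(2)}$ is a \emph{difference} of two Pr\"ufer expansions of the type \eqref{eq:martingale-decomposition}, in which the leading deterministic terms cancel. The surviving differences, e.g.\ $V_{\omega,j}\bigl(\sin 2\bar\theta_j^{(1)}-\sin 2\bar\theta_j^{(2)}\bigr)$, are controlled by the Wronskian identity
\[
R_n^{(1)}R_n^{(2)}\sin k\,\sin\bigl(\theta_n^{(2)}-\theta_n^{(1)}\bigr)=\det\mathbf T_{\omega,n}=1,
\]
which forces $|\sin(\bar\theta_n^{(1)}-\bar\theta_n^{(2)})|\asymp n^{-2\beta}$. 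This turns the martingale part of $\log r_n$ into one with increments of size $O(j^{-1/2-2\beta+\epsilon})$, to which Lemma \ref{thm:martingales}(ii) applies directly and yields $|r_n-r_\infty|\lesssim n^{-2\beta+2\epsilon}$. So the Wronskian is indeed the key ingredient, but it enters through two fixed-initial-condition solutions rather than through finite-volume Dirichlet approximants as in your option (ii).
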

The proof is given in details in Appendix \ref{app:unimodular}. We now provide a non-asymptotic lower bound on eigenfunctions that will be the key to the proof of absence of dynamical localization. The next lemma actually provides a lower bound on any non-trivial generalized solution of $H_{\omega,\lambda}x = Ex$ for $\alpha=\frac12$, $\lambda>0$, uniformly in $E$ ranging over compact intervals of $(-2,2)$. The proof is taken from \cite{BMT01} where it was developed in the context of the discrete Dirac model.
\begin{lemma}\label{thm:lower-bound-eigenfunctions}
	Let $\alpha=\frac12$ and fix $\lambda>0$. For $E\in (-2,2)$ define $x_{\omega,E}=(x_{\omega,E,n})_n$ as the solution of 
	$H_{\omega,\lambda}x = Ex$ with a possibly random initial condition $\widehat{\vartheta}_0$.
	Then, for each compact interval $I\subset (-2,2)$, there exists a deterministic constant $\kappa=\kappa(I)>0$ such that, for $\p$-almost every $\omega$, there exists $c_{\omega}=c_{\omega}(I)>0$ such that
	\begin{eqnarray}
		\sqrt{| x_{\omega,E,n}|^2 + | x_{\omega,E,n-1}|^2} \geq c_{\omega} n^{-\kappa}, \quad \forall \, n\in \Z.
	\end{eqnarray}	 
\end{lemma}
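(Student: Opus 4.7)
The plan is to reduce the claim to a uniform polynomial upper bound on the transfer matrix norm and exploit unimodularity. Writing $X_n = \bT_{\omega,n-1}(E)\widehat{\vartheta}_0$ with $\|\widehat{\vartheta}_0\| = 1$, we have
\begin{equation*}
\sqrt{|x_{\omega,E,n}|^2 + |x_{\omega,E,n-1}|^2} = \|X_n\|.
\end{equation*}
Since each $T_{\omega,j}(E)$ has determinant one, so does $\bT_{\omega,n-1}(E)$; for a $2\times 2$ matrix $M$ with $\det M = 1$ one has $\|M^{-1}\| = \|M\|$, so the identity $\widehat{\vartheta}_0 = \bT_{\omega,n-1}(E)^{-1} X_n$ yields
\begin{equation*}
\|X_n\| \geq \frac{1}{\|\bT_{\omega,n-1}(E)^{-1}\|} = \frac{1}{\|\bT_{\omega,n-1}(E)\|}.
\end{equation*}
It thus suffices to prove the \emph{uniform-in-$E$} polynomial upper bound: there exist $\kappa = \kappa(I) > 0$ and an almost surely finite $C_\omega(I)$ such that $\|\bT_{\omega,n-1}(E)\| \leq C_\omega(I)\, n^\kappa$ for all $E \in I$ and $n\geq 1$.

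By Lemma \ref{thm:comparison} applied to two fixed deterministic angles $\theta_1\neq\theta_2$, this reduces to the uniform polynomial bound $\max\{R_n(E,\theta_1), R_n(E,\theta_2)\} \leq C'_\omega(I) n^\kappa$. The Taylor expansion carried out in the proof of Proposition \ref{thm:lyapuvov} gives, for $\alpha = 1/2$,
\begin{equation*}
\log R_n^2(E,\theta) = \frac{\lambda^2}{4\sin^2 k}\sum_{j=1}^{n-1}\frac{1}{j} + \mathcal{N}_n(E,\theta) + O(1),
\end{equation*}
where $\mathcal{N}_n$ collects the martingale terms appearing in the decomposition \eqref{eq:martingale-decomposition} and the $O(1)$ absorbs the $V^3$-remainder together with the deterministic oscillatory sum treated in the Appendix. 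The leading term is at most $C(I)\log n$ uniformly for $E\in I$, since $\sin k$ is bounded away from zero on $I$.

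The core technical point is to control $\mathcal{N}_n(E,\theta)$ \emph{uniformly} in $E\in I$. Pointwise in $E$, Lemma \ref{thm:martingales}(i) with $\gamma = 1/2$ already gives $\mathcal{N}_n(E,\theta) = o((\log n)^{(1+\varepsilon)/2})$ almost surely, but this must be upgraded to a statement uniform in $E$. To this end, we use a mesh-and-interpolate argument: each entry of $\bT_{\omega,n}(E)$ is polynomial in $E$ of degree at most $n$ with bounded coefficients, so $E\mapsto \bar\theta_j(E)$ and hence $E\mapsto \mathcal{N}_n(E,\theta)$ admit a Lipschitz constant of order $n^{C_1}$ on $I$ for some $C_1=C_1(I)$. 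Discretizing $I$ into a grid of $\lceil n^{C_1+1}\rceil$ points, applying Azuma's inequality at each grid point followed by a union bound, and interpolating via the Lipschitz estimate give
\begin{equation*}
\p\Big[\sup_{E\in I}|\mathcal{N}_n(E,\theta)| \geq n^\varepsilon\Big] \leq 2 n^{C_1+1}\exp\{-cn^{\varepsilon/2}\}.
\end{equation*}
Borel--Cantelli then yields $\sup_{E\in I}|\mathcal{N}_n(E,\theta)| = o(n^\varepsilon)$ almost surely.

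Combining the previous estimates, $R_n(E,\theta_i) \leq C'_\omega(I)n^{\kappa}$ uniformly in $E\in I$ for any fixed $\kappa > \lambda^2/(8\inf_{E\in I}\sin^2 k)$, which completes the proof. The bound for $n<0$ follows by applying the same argument to the backward transfer matrices. The main obstacle is precisely the uniformity in $E$: the Lipschitz-in-$E$ control on $\mathcal{N}_n$, combined with the polynomiality of the transfer matrix in $E$, is what makes the discretization work; the pointwise-in-$E$ conclusions of Proposition \ref{thm:lyapuvov} alone would not suffice.
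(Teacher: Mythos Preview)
Your reduction via unimodularity to an upper bound on $\|\bT_{\omega,n-1}(E)\|$ matches the paper exactly. From that point on, however, the paper takes a much shorter route: it applies Markov's inequality,
\[
\p\big[\|\bT_{\omega,n-1}(E)\|\ge n^\kappa\big]\le n^{-2\kappa}\,\esp\big[\|\bT_{\omega,n-1}(E)\|^2\big],
\]
bounds the second moment by $C\big(\esp[R_n^2(E,\vartheta_1)]+\esp[R_n^2(E,\vartheta_2)]\big)$ via Lemma~\ref{thm:comparison}, and then observes that the conditional-expectation trick of Section~\ref{sec:super-critical-case} (applied to $R_n^2$ instead of $R_n^4$) gives $\esp[R_n^2(E,\vartheta)]\le\prod_{j=1}^n(1+b/j)\le C_2 n^{b}$ with $b=b(I)$. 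Choosing $\kappa>(b+1)/2$ makes $n^{b-2\kappa}$ summable, and Borel--Cantelli finishes. The point is that taking expectations annihilates the martingale terms outright; no Taylor expansion of $\log R_n$, no Azuma, and no uniformity-in-$E$ machinery is needed.

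Your pathwise route---expanding $\log R_n$ and controlling the martingale piece via Azuma plus a mesh in $E$---is a genuinely different strategy, and would in principle deliver the stronger conclusion that the bound holds for all $E\in I$ on a single full-measure set. But as written it has a real gap: the claim that $E\mapsto\mathcal N_n(E,\theta)$ has Lipschitz constant of order $n^{C_1}$ is not justified by the polynomiality of $\bT_{\omega,n}(E)$ in $E$. The entries of $\bT_{\omega,n}$ are degree-$n$ polynomials, but their values and derivatives on $I$ are a priori only bounded by $\prod_j\|T_{\omega,j}(E)\|$, which is exponential in $n$ (or at best $\exp(c\sqrt n)$ via the crude bound $\|T_{\omega,j}\|\le\|T\|+Cj^{-1/2}$). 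An exponentially fine mesh then forces a union bound over $e^{c\sqrt n}$ grid points, which swamps the Azuma tail $e^{-cn^{2\varepsilon}/\log n}$ unless $\varepsilon>1/4$; but with $\varepsilon>1/4$ the resulting bound $\log R_n\lesssim n^{1/4+\delta}$ is no longer polynomial, and the argument collapses. A direct estimate of $\partial_E\bar\theta_j$ from the phase recursion~\eqref{eq:main-rec-phases} does not obviously do better. (A smaller issue: the oscillatory sum handled in Appendix~\ref{app:phases} is only shown to be $o(\log n)$, not $O(1)$, and its uniformity in $E\in I$ is not established there either.) None of this is fatal to the overall strategy, but the moment argument in the paper bypasses all of it in two lines.
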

\begin{proof}
	We prove the bound for all $n\geq 1$, the opposite case being similar.
	We can reconstruct $x_{\omega,E}$ through the recurrence
	$\displaystyle
		\begin{pmatrix}
			x_{\omega,E,n} \\ x_{\omega,E,n-1}
	\end{pmatrix}			
	=
	{\bf T}_{\omega,n-1}
	\widehat{\vartheta}_0
	$.
	This implies in particular that
	\begin{eqnarray}
		\sqrt{| x_{\omega,E,n}|^2 + | x_{\omega,E,n-1}|^2} 
		\geq 
		\| {\bf T}_{\omega,n-1}\|^{-1}.
	\end{eqnarray}	
	Hence, using Lemma \ref{thm:comparison} with some $\vartheta_1 \neq \vartheta_2$, 
	\begin{eqnarray}
		\p\left[ \| X_{\omega,E,n}\|  \leq n^{-\kappa}\right]
		&\leq&
		\p\left[ \| {\bf T}_{\omega,n-1}(E) \| \geq n^{\kappa}\right]\\
		&\leq& 
		n^{-2\kappa} \esp\left[ \| {\bf T}_{\omega,n-1} (E)\|^2\right]\\
		&\leq& 
		C_1(\vartheta_1,\vartheta_2) n^{-2\kappa}
		\left\{
			 \esp\left[ R_n^2(E,\vartheta_1)\right]+\esp\left[ R_n^2(E,\vartheta_1)\right]
		\right\},
	\end{eqnarray}
	for some $C_1(\vartheta_1,\vartheta_2)>0$.
	Keeping in mind the recursion \eqref{eq:main-rec}, the argument of the proof of Part 1 of Theorem \ref{thm:main} given in Section \ref{sec:super-critical-case} can be reproduced and yields
	\begin{eqnarray}
		\esp\left[ R_n^2(E,\vartheta_1)\right]
		&\leq& 
		\prod^n_{j=1} \left( 1 + \frac{b}{j} \right)
		\leq
		C_2 n^{b-2\kappa},
	\end{eqnarray}
	for some constants $b=b(I)$ and $C_2=C_2(b)$. The estimate for $R_n^2(E,\vartheta_2)$ is of course similar. Taking $b-2\kappa<-1$, the result follows by Borel-Cantelli.
%
%
\end{proof}


\subsection{Absence of dynamical localization}

The following result shows that no dynamical localization can arise in the critical case. This was already stated in Theorem \ref{thm:GKT} from \cite{GKT}. The simple argument below uses our lower bound on eigenfunctions from Lemma \ref{thm:lower-bound-eigenfunctions} and is once again taken from \cite{BMT01}.

\begin{proposition}\label{thm:critical-delocalization}
	Let $\alpha=\frac12$ and $\lambda>0$.
	Let $I$ be a compact interval contained in the interior of $\sigma_{pp}(H_{\omega,\lambda})$. Then, there exists $p_0>0$ such that,
	for $\p$-almost every $\omega$, 
	\begin{eqnarray}
		\limsup_{t\to \infty} \left\| |X|^{p/2} \e^{-it H_{\omega,\lambda}} \psi\right\|^2 = \infty,
	\end{eqnarray}
	for all $p>p_0$ and $\psi \in {\text Ran}P_I(H_{\omega,\lambda})$.
\end{proposition}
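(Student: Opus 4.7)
The plan is to combine the polynomial lower bound on eigenfunctions from Lemma \ref{thm:lower-bound-eigenfunctions} with a time-averaging argument of Wiener type. The idea is first to show that every $\ell^2$-eigenfunction of $H_{\omega,\lambda}$ associated to an eigenvalue in $I$ has infinite $p$-th stretched moment for $p$ large enough, and then to transfer this property to an arbitrary $\psi \in \mathrm{Ran}\,P_I(H_{\omega,\lambda})$ via a Ces\`aro average in $t$.

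Set $p_0:=2\kappa(I)-1$, where $\kappa(I)$ is the exponent furnished by Lemma \ref{thm:lower-bound-eigenfunctions}. For a normalized $\ell^2$-eigenfunction $\phi_E$ of $H_{\omega,\lambda}$ with $E\in I$, the lemma yields
\[
|\phi_{E,n}|^2 + |\phi_{E,n-1}|^2 \geq c_\omega^2\, n^{-2\kappa}, \quad \forall\, n\in\mathbb{N}.
\]
Since $\phi_E \in \ell^2$, this forces $\kappa > \tfrac12$, hence $p_0 > 0$. Grouping consecutive terms and summing gives $\norm{|X|^{p/2}\phi_E}^2 \geq c_\omega^2 \sum_n (n-1)^p n^{-2\kappa} = +\infty$ for every $p > p_0$, so every eigenfunction of $H_{\omega,\lambda}$ with eigenvalue in $I$ has infinite stretched moment of order $p$.

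To conclude, I argue by contradiction. Suppose $\limsup_{t\to\infty}\norm{|X|^{p/2}e^{-itH_{\omega,\lambda}}\psi}^2 <\infty$. Then the Ces\`aro average $\tfrac{1}{T}\int_0^T \norm{|X|^{p/2}e^{-itH_{\omega,\lambda}}\psi}^2 \, dt$ is bounded uniformly in $T$. Expanding $\psi=\sum_j c_j\phi_{E_j}$ over the eigenfunctions of $H_{\omega,\lambda}$ with eigenvalues $E_j \in I$, applying Fubini to the non-negative integrand, and using that the eigenvalues of the one-dimensional half-line operator $H_{\omega,\lambda}$ are almost surely simple and hence pairwise distinct, the standard identity $\tfrac{1}{T}\int_0^T e^{-it(E_j-E_k)}\,dt\to \delta_{jk}$ combined with Fatou's lemma on the $n$-sum gives
\[
\liminf_{T\to\infty}\frac{1}{T}\int_0^T\norm{|X|^{p/2}e^{-itH_{\omega,\lambda}}\psi}^2\,dt \geq \sum_j |c_j|^2\, \norm{|X|^{p/2}\phi_{E_j}}^2.
\]
Since $\psi\neq 0$, at least one $c_j$ is non-zero, and by the previous paragraph the right-hand side is infinite, contradicting the assumed uniform bound.

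The main obstacle I anticipate is the rigorous verification that Lemma \ref{thm:lower-bound-eigenfunctions} genuinely applies to the $\ell^2$-eigenfunctions appearing in the eigenbasis. The lemma is stated for solutions with a possibly random initial angle, but its proof relies on the direction-free bound $\norm{\bT_{\omega,n-1}\widehat\vartheta}\geq \norm{\bT_{\omega,n-1}}^{-1}$ coming from unimodularity, so the conclusion carries over to the initial direction selected by the $\ell^2$-boundary condition at infinity, after absorbing a normalization constant into $c_\omega$. A secondary minor point is the almost sure simplicity of the eigenvalues, which for a one-dimensional half-line Schr\"odinger operator follows from the uniqueness of $\ell^2$-solutions up to scalar.
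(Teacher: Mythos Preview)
Your proof is correct and follows essentially the same strategy as the paper: combine the polynomial lower bound on eigenfunctions from Lemma \ref{thm:lower-bound-eigenfunctions} with a Wiener-type time average to force the moments of every state in $\mathrm{Ran}\,P_I(H_{\omega,\lambda})$ to diverge. The only noteworthy technical difference is that the paper works with the truncated position operator $X_N = X\chi_{[-N,N]}$, keeps all quantities finite throughout, and extracts an explicit diverging sequence $(t_N)_N$ with $\bigl\| |X_N|^{p/2}\e^{-it_NH_{\omega,\lambda}}\psi\bigr\|^2 \gtrsim N^{p-2\kappa+1}$, whereas you work directly with the unbounded $|X|$, use Tonelli on the $n$-sum and Fatou to pass to the limit, and conclude by contradiction. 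Your route is slightly less quantitative but perfectly adequate for the stated proposition; your value $p_0 = 2\kappa(I)-1$ is the correct threshold (the paper's displayed exponent $p-\kappa+1$ should read $p-2\kappa+1$, since Lemma \ref{thm:lower-bound-eigenfunctions} bounds $\sqrt{|x_n|^2+|x_{n-1}|^2}$ rather than its square). Your explicit remarks on the applicability of the lemma to the random $\ell^2$ initial direction and on the simplicity of the half-line eigenvalues are points the paper leaves implicit.
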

\begin{proof}
	Let $c_{\omega}$ and $\kappa$ be as in Lemma \ref{thm:lower-bound-eigenfunctions}. Let $(\psi_l)_{l}$ be a basis of ${\text Ran}P_I(H_{\omega,\lambda})$ consisting of eigenfunctions of the operator $H_{\omega,\lambda}$, with corresponding eigenvalues $(E_l)_l$.
	 Define the truncated position operator $X_N = X \chi_{[-N,N]}$. Then, taking $p>\kappa-1$,
	\begin{eqnarray}
		\left\| |X_N|^{p/2} \psi_l\right\|^2
		&=&
		\sum_{|n|\leq N} |n|^p |\psi_l(n)|^2\\
		&\geq&
		\frac12 \sum_{|n|\leq N-1} (|n|-1)^p \max\{|\psi_l(n)|^2,|\psi_l(n-1)|^2\}\\
		&\geq&
		\frac{c_{\omega}}{2} \sum_{|n|\leq N-1} (|n|-1)^p |n|^{-\kappa}
		\geq
		c'_{\omega} N^{p-\kappa+1},
	\end{eqnarray}		 
	for some $c'_{\omega}>0$ and for all $l$. Let $\psi \in {\text Ran}P_I(H_{\omega,\lambda})$ and write $\psi = \sum_l a_l \psi_l$. Hence,
	\begin{eqnarray}
		\left\| |X_N|^{p/2} e^{-i t H_{\omega,\lambda}}\psi\right\|^2
		=
		\sum_{l,l'} \alpha_l \overline{\alpha}_{l'}e^{-it(E_l - E_{l'})}
		\langle \psi_{l'}, |X_N|^{p/2} \psi_l \rangle.
	\end{eqnarray}
	After a careful application of the dominated convergence theorem to exchange sums and integrals, we obtain
	\begin{eqnarray}
		\lim_{T\to\infty} \frac{1}{T} \int^T_0 \left\| |X_N|^{p/2} \e^{-i t H_{\omega,\lambda}}\psi\right\|^2 dt
		=
		\sum_l |a_l|^2 \left\| |X_N|^{p/2} \psi_l\right\|^2
		\geq 
		c'_{\omega} N^{p-\kappa+1}.
	\end{eqnarray}
	Hence, there exists an diverging (random) sequence $(T_N)_N$ such that
	\begin{eqnarray}
		\frac{1}{T_N} \int^{T_N}_0 \left\| |X_N|^{p/2} \e^{-i t H_{\omega,\lambda}}\psi\right\|^2 dt
		\geq 
		\frac{c'_{\omega}}{2} N^{p-\kappa+1},
	\end{eqnarray}
	for all $N\geq 1$. From here, we can find a diverging (random) sequence $(t_N)_N$ such that
	\begin{eqnarray}
		\left\| |X_N|^{p/2} \e^{-i t_N H_{\omega,\lambda}}\psi\right\|^2
		\geq 
		\frac{c'_{\omega}}{4} N^{p-\kappa+1},
	\end{eqnarray}
	for all $N\geq 1$. This finishes the proof.
\end{proof}

\section{Sub-critical case: Dynamical Localization}\label{sec:dyn-loc}


Theorem \ref{thm:dynamical-localization} was proved in \cite{Si82} using the Kunz-Souillard method. We briefly outline the approach used in \cite{BMT02} to prove dynamical localization for the analogue continuum model and which becomes particularly simple when adapted to the discrete case (see \cite{BMT01} for the related discrete Dirac  model). We use the fractional moment method \cite{AM,AW}.

The key to this approach is to estimate the correlator by the fractional moments of the Green's function in boxes. Let $H_{\omega,L}$ denote the restriction of the operator $H_{\omega,\lambda}$ to $[-L,L]$, let $R_{\omega,L}(E)=(H_{\omega,L}-E)^{-1}$ and let $G_{\omega,L}(m,n)= \bra \delta_m, R_{\omega,L}(E) \delta_n \ket$.
A suitable adaptation of the arguments of \cite[Section 7]{AW} to the decaying potential case shows that
\begin{eqnarray}
	\esp\left[ Q_{\omega}(m,n;I )^2\right]
	\leq
	C
	|\lambda|^{-1}
	a_m^{-1}
	\liminf_{L\to\infty}
	\int_I \esp\left[ |G_{\omega,L}(m,n)|^s \right] \, dE,
\end{eqnarray}
for $s\in(0,1)$ small enough, for some $C=C(s)>0$ and all $m,n\in\n$. It is then possible to show that there exists a constant $C=C(s,I)$ such that
\begin{eqnarray}
	\esp\left[ |G_{\omega,L}(m,n)|^s \right] 
	\leq 
	C
	|\lambda|^{-2s} a_m^{-2s}
	\esp\left[ \| {\bf T}_{\omega,[m,n]}(E) \widehat{X}_m \|^{-s} \right],
\end{eqnarray}
for all $E\in I$ and all $m,n\in\n$, where ${\bf T}_{\omega,[m,n]}(E) = T_{\omega,n} \cdots T_{\omega,m}$ and $\widehat{X}_m = \|X_m\|^{-1}X_m$ with $X_m = {\bf T}_{\omega,m} X_0$. The inverse fractional moments of transfer matrices are shown to decrease exponentially in \cite{CKM} in the ergodic case. The proof uses the positivity of the Lyapunov exponent as an input. The decaying potential case requires a finer analyisis \cite{BMT02,BMT01} and uses the asymptotics of transfer matrices given in Proposition \ref{thm:lyapuvov} as a starting point. More precisely, it is possible to show that, for $s\in(0,1)$ small enough, there exists constants $C=C(m,s,I)>0$ and $c=c(s,I)>0$ such that
\begin{eqnarray}
	\esp\left[ \| {\bf T}_{\omega,[m,n]}(E) \widehat{X}_m \|^{-s} \right]
	\leq
	C e^{-cn^{1-2\alpha}},
\end{eqnarray}
for all $E\in I$ and all $n\in\n$. Theorem \ref{thm:dynamical-localization} then follows from a concatenation of the above estimates.

The proof of the upper bound in Proposition \ref{thm:bounds-eigenfunctions} in standard (see \cite[Theorem 9.22]{CFKS}). The lower bound can be proved along the lines of Lemma \ref{thm:lower-bound-eigenfunctions}. Proposition \ref{thm:lower-bounds-DL} can then be proved as Proposition \ref{thm:critical-delocalization} using Proposition \ref{thm:bounds-eigenfunctions} instead of Lemma \ref{thm:lower-bound-eigenfunctions}.

\appendix

\section{Some technical lemmas}


\subsection{Two results on unimodular matrices}\label{app:unimodular}
We say that a matrix is unimodular if it has determinant equal to $1$. The following lemma is used to compare the asymptotics of the transfer matrix with those of the sequence $(R_n)_n$.
\begin{lemma}\label{thm:unimodular}
	Let $A$ be an unimodular matrix and let $\hat \theta = (\cos \theta,\, \sin \theta)$. Then, for all pair of angles $|\theta_1-\theta_2|\leq \frac{\pi}{2}$,
	\begin{eqnarray}
		\| A \| \leq \sin \left( \tfrac{|\theta_1-\theta_2|}{2}\right)^{-1} \max \{ \| A \hat \theta_1\|,\, \| A\hat \theta_2\| \}.
	\end{eqnarray}
\end{lemma}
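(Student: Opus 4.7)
My plan is to pass to the singular value decomposition of $A$. Since $\det A = 1$, I can write $A = U\Sigma V^{T}$ with $\Sigma = \mathrm{diag}(\sigma_1,\sigma_2)$, $\sigma_1\sigma_2 = 1$, and $\|A\| = \sigma_1$. Let $v_2 = \widehat\psi$ denote the unit right singular vector corresponding to $\sigma_2$---i.e.\ the direction of the short semi-axis of the ellipse $A(S^{1})$---and let $v_1$ be an orthogonal unit vector. Expanding $\widehat{\theta}_i$ in the orthonormal basis $\{v_2,v_1\}$ gives $\widehat{\theta}_i = \cos(\theta_i-\psi)\,v_2 + \sin(\theta_i-\psi)\,v_1$, so that
\[
\|A\widehat{\theta}_i\|^{2}
= \sigma_2^{2}\cos^{2}(\theta_i-\psi) + \sigma_1^{2}\sin^{2}(\theta_i-\psi)
\;\geq\; \sigma_1^{2}\sin^{2}(\theta_i-\psi).
\]
The claim therefore reduces to the trigonometric inequality
\[
\max\{\,|\sin(\theta_1-\psi)|,\,|\sin(\theta_2-\psi)|\,\}\;\geq\;\sin(|\theta_1-\theta_2|/2),
\]
valid for any $\psi$, and this is where the hypothesis $|\theta_1-\theta_2|\leq \pi/2$ has to be used.

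To prove the trigonometric bound I would set $\phi_i = \theta_i-\psi$ and rely on the identity
\[
\sin^{2}\phi_1+\sin^{2}\phi_2 \;=\; 1-\cos(\phi_1+\phi_2)\cos(\phi_1-\phi_2).
\]
Since $|\phi_1-\phi_2|=|\theta_1-\theta_2|\leq\pi/2$, the factor $\cos(\phi_1-\phi_2)$ is non-negative, so the right-hand side is at least $1-\cos(\phi_1-\phi_2) = 2\sin^{2}((\phi_1-\phi_2)/2)$. Because the maximum of two non-negative reals is at least their average, this yields $\max\{\sin^{2}\phi_1,\sin^{2}\phi_2\}\geq \sin^{2}(|\phi_1-\phi_2|/2)$. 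Combining with the SVD estimate above and rearranging produces the claimed bound.

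The only delicate point is the role of the angular cap. Without $|\theta_1-\theta_2|\leq\pi/2$ the sign of $\cos(\phi_1-\phi_2)$ is not controlled, and a well-chosen $\psi$ can place both $\widehat{\theta}_i$ close to $\pm v_2$ (for example $\theta_1=\pi/8$, $\theta_2=7\pi/8$, $\psi=0$), making $\max_i|\sin\phi_i|$ strictly smaller than $\sin(|\theta_1-\theta_2|/2)$; so the hypothesis is genuinely needed. Beyond this observation, the argument is a routine consequence of the singular value picture.
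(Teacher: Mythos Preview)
Your proof is correct and follows essentially the same route as the paper: both arguments identify the singular direction of $A$ (you via the explicit SVD and the minimizing vector $v_2=\widehat\psi$, the paper via the vector $\hat\theta_0$ satisfying $A^*\hat\sigma_0=\|A\|\hat\theta_0$, which is just $v_1=\widehat\psi^{\perp}$), deduce $\|A\hat\theta\|\ge\|A\|\,|\sin(\theta-\psi)|=\|A\|\,|\cos(\theta-\theta_0)|$, and then bound the maximum of the two trigonometric quantities from below by $\sin(|\theta_1-\theta_2|/2)$. The only cosmetic difference is that the paper finishes by locating the minimizer of $x\mapsto\max\{|\cos x|,|\cos(x+\gamma)|\}$ directly, whereas you use the sum-to-product identity together with $\max\ge$ average; both yield the same bound under the hypothesis $|\theta_1-\theta_2|\le\pi/2$.
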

\begin{proof}
	First, there exists angles $\theta_0$ and $\sigma_0$ such that $A^* \hat \sigma_0 = \| A \| \hat \theta_0$. Then, for any angle $\theta$,
	\begin{eqnarray}\nonumber
		| \cos(\theta-\theta_0) | = |\langle \hat \theta, \hat \theta_0 \rangle |
		= \frac{1}{\| A \|} | \langle\hat \theta, A^*\hat \sigma_0 \rangle |
		= \frac{1}{\| A \|} | \langle A \hat \theta, \hat \sigma_0 \rangle |
		\leq \frac{\| A \hat \theta \|}{\| A \|}.
	\end{eqnarray}
	This way, for any pair of angles, we obtain
	\begin{eqnarray}\nonumber
		\| A \| \max\{ | \cos(\theta_1-\theta_0) |,\, | \cos(\theta_2-\theta_0) |  \} \leq \max\{ \| A\hat \theta_1\|,\, \| A\hat \theta_2\|\}.
	\end{eqnarray}
	To conclude, just note that for all $|\gamma|\leq \pi/2$, the minimum of the function $x\mapsto \max\{|\cos(x)|, $ $ |\cos(x+\gamma)|\}$ is attained at $\pi/2-\gamma/2$ and is equal to $|\cos(\pi/2-\gamma/2)|=|\sin(\gamma/2)|$.
\end{proof}
The following lemma is used to find eigenfunctions with the proper decay.
\begin{lemma}\label{thm:eigendirection}
	For a unimodular matrix with $\| A \| >1$, define $\vartheta=\vartheta(A)$ as the unique angle $\vartheta \in (-\frac{\pi}{2},\frac{\pi}{2}]$ such that $\| A \hat\vartheta \| = \| A \|^{-1}$. We also define $r(A)=\| A (1,0)^T\| / \| A (0,1)^T\|$.

	Let $(A_n)_n$ be a sequence of unimodular matrices with $\| A_n \| >1$ and write $\vartheta_n=\vartheta(A_)$ and $r_n = r(A_n)$.
	Assume that
	\begin{enumerate}[label=\alph*.-]
		\item[(i)] $\displaystyle \lim_{n\to \infty} \| A_n\| = \infty$,
		
		\vspace{1ex}
		
		\item[(ii)] $\displaystyle \lim_{n\to\infty} \frac{\| A_{n+1}A_n^{-1}\|}{\| A_n \| \, \| A_{n+1}\|}=0$.
	\end{enumerate}
	Then,
	\begin{enumerate}
		\item $(\vartheta_n)_n$ has a limit $\vartheta_{\infty}\in (-\pi/2,\pi/2)$ if and only if $(r_n)_n$ has a limit $r_{\infty}\in[0,\infty)$. If $\vartheta_n\to\pm\pi/2$, then $r_n\to\infty$ but, if $r_n\to\infty$, we can only conclude that $|\vartheta_n|\to\pi/2$.
		
		\vspace{1ex}
		
		\item Suppose $(\vartheta_n)_n$ has a limit $\vartheta_{\infty}\neq 0,\, \frac{\pi}{2}$. Then,
		\begin{eqnarray}\label{eq:asymptotic-conditions}
			\lim_{n\to\infty} \frac{\log \| A_n \hat\vartheta_{\infty}\|}{\log \| A_n \|} = -1
			\quad
			\text{if and only if}
			\quad
			\limsup_n \frac{\log |r_n - r_{\infty}|}{\log \| A_n \|} \leq -2.
		\end{eqnarray}
	\end{enumerate}
\end{lemma}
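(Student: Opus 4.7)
My approach is to reduce both parts to an explicit formula for $r(A)^2$ derived from the singular value decomposition of a unimodular matrix with $\|A\|>1$. Such a matrix has singular values $\|A\|$ and $\|A\|^{-1}$, associated with orthogonal right-singular directions $\hat\vartheta$ (contracting) and $\hat\vartheta^\perp$ (expanding), and with $A\hat\vartheta\perp A\hat\vartheta^\perp$. Decomposing $e_1=\cos\vartheta\cdot\hat\vartheta-\sin\vartheta\cdot\hat\vartheta^\perp$ and $e_2=\sin\vartheta\cdot\hat\vartheta+\cos\vartheta\cdot\hat\vartheta^\perp$ and using the orthogonality of the images,
\[
\|Ae_1\|^2 = \tfrac{\cos^2\vartheta}{\|A\|^2}+\sin^2\vartheta\cdot\|A\|^2,\qquad \|Ae_2\|^2 = \tfrac{\sin^2\vartheta}{\|A\|^2}+\cos^2\vartheta\cdot\|A\|^2,
\]
and hence the key identity
\[
r(A)^2 \;=\; \frac{\sin^2\vartheta\cdot\|A\|^4+\cos^2\vartheta}{\cos^2\vartheta\cdot\|A\|^4+\sin^2\vartheta} \;=\; \tan^2\vartheta+O(\|A\|^{-4}),
\]
the last equality valid whenever $\vartheta$ stays bounded away from $\pm\pi/2$ and $\|A\|\to\infty$.

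For Part 1, the forward direction is immediate from the formula: $\vartheta_n\to\vartheta_\infty\in(-\pi/2,\pi/2)$ gives $r_n\to|\tan\vartheta_\infty|$, while $|\vartheta_n|\to\pi/2$ makes the numerator grow faster than the denominator and forces $r_n\to\infty$. For the converse, $r_n\to r_\infty\in[0,\infty)$ gives $\tan^2\vartheta_n\to r_\infty^2$, hence $|\vartheta_n|\to\arctan r_\infty$. To promote this to genuine convergence of $\vartheta_n$ when $r_\infty>0$, I will invoke hypothesis (ii): setting $B_n=A_{n+1}A_n^{-1}$ and $v_n=\hat\vartheta_n$, we have $\|A_{n+1}v_n\|\le\|B_n\|/\|A_n\|$; decomposing $v_n$ in the SVD frame of $A_{n+1}$ gives the lower bound $\|A_{n+1}v_n\|^2\ge\sin^2(\vartheta_{n+1}-\vartheta_n)\|A_{n+1}\|^2$, and combining the two yields
\[
|\sin(\vartheta_{n+1}-\vartheta_n)| \;\le\; \frac{\|B_n\|}{\|A_n\|\cdot\|A_{n+1}\|}\;\longrightarrow\;0.
\]
Consecutive angles thus become arbitrarily close, which prevents oscillation between $+\arctan r_\infty$ and $-\arctan r_\infty$, so $\vartheta_n$ stabilizes to a single sign and converges. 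The case $r_\infty=0$ reduces to $|\vartheta_n|\to 0$, which trivially gives $\vartheta_n\to 0$. In the divergent case $r_n\to\infty$, the formula is intrinsically insensitive to the sign of $\vartheta$, so only $|\vartheta_n|\to\pi/2$ can be recovered.

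For Part 2, an analogous decomposition of $\hat\vartheta_\infty$ in the SVD frame of $A_n$ yields
\[
\|A_n\hat\vartheta_\infty\|^2 \;=\; \cos^2(\vartheta_\infty-\vartheta_n)\|A_n\|^{-2}+\sin^2(\vartheta_\infty-\vartheta_n)\|A_n\|^2,
\]
so (using $\|A_n\hat\vartheta_\infty\|\ge\|A_n\|^{-1}$ automatically) the condition $\log\|A_n\hat\vartheta_\infty\|/\log\|A_n\|\to -1$ is equivalent to $\sin^2(\vartheta_\infty-\vartheta_n)\|A_n\|^2\le\|A_n\|^{-2+o(1)}$, i.e., $|\vartheta_n-\vartheta_\infty|=\|A_n\|^{-2+o(1)}$. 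Meanwhile, Taylor-expanding the key formula for $r(A_n)^2$ around $(\vartheta_\infty,0)$ in the variables $\vartheta$ and $t=\|A\|^{-4}$, and dividing by $r_n+r_\infty\to 2|\tan\vartheta_\infty|$, gives
\[
r_n-r_\infty \;=\; C(\vartheta_\infty)\,(\vartheta_n-\vartheta_\infty) + O(\|A_n\|^{-4}) + O\!\left((\vartheta_n-\vartheta_\infty)^2\right),
\]
with $C(\vartheta_\infty)=\mathrm{sgn}(\tan\vartheta_\infty)\,\sec^2\vartheta_\infty\neq 0$ because $\vartheta_\infty\neq 0,\pi/2$. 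The $O(\|A_n\|^{-4})$ remainder contributes log-ratio $\le -4$ with respect to $\log\|A_n\|$, so both $\limsup\log|r_n-r_\infty|/\log\|A_n\|\le -2$ and $\limsup\log|\vartheta_n-\vartheta_\infty|/\log\|A_n\|\le -2$ are equivalent to the same quantitative bound $|\vartheta_n-\vartheta_\infty|=\|A_n\|^{-2+o(1)}$, establishing the claim.

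The main obstacle I anticipate is the converse in Part 1: converting the qualitative vanishing supplied by hypothesis (ii) into enough continuity of $n\mapsto\vartheta_n$ to forbid sign oscillation, while also treating cleanly the degenerate endpoints $r_\infty=0$ (where the two sign branches collapse) and $r_n\to\infty$ (where the formula's $\vartheta\mapsto-\vartheta$ symmetry prevents recovery of the sign).
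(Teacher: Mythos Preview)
Your proof is correct and follows essentially the same strategy as the paper: relate $r_n$ to $|\tan\vartheta_n|$, use hypothesis (ii) to force $\vartheta_{n+1}-\vartheta_n\to 0$ and rule out sign oscillation, and decompose $\hat\vartheta_\infty$ in the singular frame of $A_n$ for Part 2. The one noteworthy difference is that you exploit the orthogonality of the right singular vectors from the outset, obtaining the closed formula $r(A)^2=(\sin^2\vartheta\,\|A\|^4+\cos^2\vartheta)/(\cos^2\vartheta\,\|A\|^4+\sin^2\vartheta)$ and the exact identity $\|A_n\hat\vartheta_\infty\|^2=\cos^2(\vartheta_\infty-\vartheta_n)\|A_n\|^{-2}+\sin^2(\vartheta_\infty-\vartheta_n)\|A_n\|^2$; the paper instead introduces both the expanding direction $\hat\vartheta_n'$ and the orthogonal direction $\hat\vartheta_n^\perp$ separately and proves $\|A_n\hat\vartheta_n^\perp\|\gtrsim\|A_n\|$ by an auxiliary decomposition, which is redundant since for real $2\times 2$ matrices these directions coincide. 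Your route is therefore slightly cleaner, and your explicit $O(\|A_n\|^{-4})$ remainder in the $r_n\approx|\tan\vartheta_n|$ relation makes the Part 2 equivalence more transparent than the paper's ``$\asymp$'' chain.
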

\begin{proof}
	Recall $\vartheta_n$ denotes the unique angle in $(-\pi/2,\pi/2]$ such that $\| A_n \hat\vartheta_n\| = \| A_n \|^{-1}$. Let $\vartheta_n' \in (\vartheta_n-\pi/2, \vartheta_n+\pi/2]$ be the unique angle such that $\| A_n \hat\vartheta_n'\| = \| A_n \|$ and let $\vartheta_n^\perp$ be either $\vartheta_n -\pi/2$ or $\vartheta_n+\pi/2$ and such that $\vartheta_n'$ lies between $\vartheta_n$ and $\vartheta_n^{\perp}$.
	
	Let $\displaystyle v_0 = \begin{pmatrix}
		1 \\0
	\end{pmatrix}$,
	$\displaystyle w_0 = \begin{pmatrix}
		0\\1
	\end{pmatrix}$,
	$v_n=A_n v_0$ and $w_n=A_n w_0$. Then,
	\begin{eqnarray}
		\hat\vartheta_n = \cos\vartheta_n v_0 + \sin\vartheta_nw_0,
	\end{eqnarray}
	and applying $A_n$ on both sides,
	\begin{eqnarray}
		\pm\frac{1}{\|A_n\|} \hat\vartheta_n= A_n \hat\vartheta_n = \cos \vartheta_n v_n + \sin \vartheta_n w_n  \simeq 0,
	\end{eqnarray}
	as $\|A_n\|\to \infty$. Hence,
	\begin{eqnarray}
		r_n = r(A_n) = \frac{\|v_n\|}{\|w_n\|} \simeq |\tan \vartheta_n|.
	\end{eqnarray}
	This shows that, if $\vartheta_n\to\vartheta_\infty$, then $r_n\to|\tan \vartheta_\infty|\in [0,\infty]$. On the other hand, if $r_n\to r_{\infty}\in[0,\infty]$, then $|\vartheta_n| \to \arctan(\rho_{\infty})$, with the convention $\arctan(\infty)=\pi/2$. If $\rho_{\infty}=\infty$, we only get that $|\vartheta_n|\to\pi/2$. Otherwise, it is enough to prove that $\vartheta_n-\vartheta_{n-1}\to0$ to show the convergence of $\vartheta_n$. For this, we use the decomposition
	\begin{eqnarray}\label{eq:decomposition-angles}
		\hat\vartheta = \cos(\vartheta-\vartheta_n) \hat\vartheta_n + \sin(\vartheta-\vartheta_n) \hat\vartheta_n^\perp.
	\end{eqnarray}
	Applied to $\vartheta_{n+1}$, this yields
	\begin{eqnarray}
		\hat\vartheta_{n+1} = \cos(\vartheta_{n+1}-\vartheta_n) \hat\vartheta_n + \sin(\vartheta_{n+1}-\vartheta_n) \hat\vartheta_n^\perp,
	\end{eqnarray}
	and
	\begin{eqnarray}
		A_n\hat\vartheta_{n+1} &=& \pm \| A_n \|^{-1}\cos(\vartheta_{n+1}-\vartheta_n) \hat\vartheta_n + \sin(\vartheta_{n+1}-\vartheta_n) A_n\hat\vartheta_n^\perp\\
			&\simeq&
			\sin(\vartheta_{n+1}-\vartheta_n) A_n\hat\vartheta_n^\perp.\notag
	\end{eqnarray}
	Hence,
	\begin{eqnarray}
		\|A_n\hat\vartheta_n^\perp\| |\sin(\vartheta_{n+1}-\vartheta_n)|
		&\simeq&
		\|A_n\hat\vartheta_{n+1}\|\\
		&\leq&
		\| A_n A_{n+1}^{-1}\| \| A_{n+1} \hat\vartheta_{n+1}\|
		=
		\frac{\| A_{n+1}^{-1}A_n \|}{\| A_{n+1} \|}.
\notag
		\end{eqnarray}
	In the last step, we used that the $A_n$'s are unimodular to switch from $\| A_n A_{n+1}^{-1}\|$ to $\| A_{n+1}^{-1} A_n \|$.
	By condition (ii), it is enough to show that $\|A_n\hat\vartheta_n^\perp\| \gtrsim \|A_n\|$. For this, decompose
	\begin{eqnarray}
		\vartheta_n^\perp = \alpha_n \hat\vartheta_n' + \beta_n \hat\vartheta_n,
	\end{eqnarray}
	for some coefficients such that $\alpha^2 + \beta^2=1$. Note that, by construction, $|\alpha_n|>|\beta_n|$ and hence $|\alpha_n|\geq 1/2$. Therefore,
	\begin{eqnarray*}
		\| A_n\hat\vartheta_n^\perp \|^2 &=& \alpha^2 \| A_n \|^2 + \beta^2 \|A_n\|^{-2} \geq \frac14 \|A_n\|^2.
	\end{eqnarray*}
	This finishes the proof Part 1.
	
	To prove Part 2, assume $\vartheta_n \to \vartheta_{\infty} \in (0,\pi/2)$ and apply \eqref{eq:decomposition-angles} to $\vartheta=\vartheta_{\infty}$ to obtain
	\begin{eqnarray}
		A_n\vartheta_{\infty} = \pm \| A_n \|^{-1} \cos(\vartheta_{\infty}-\vartheta_n) \hat\vartheta_n + \sin(\vartheta_{\infty}-\vartheta_n) A_n \hat\vartheta_n^{\perp}.
	\end{eqnarray}
	Recalling that $\|A_n\|\to\infty$, $\vartheta_n-\vartheta_{\infty}\to0$ and $\|A_n \hat\vartheta_n^\perp\| \asymp \|A_n\|$,
	\begin{eqnarray}
		\frac{\log \|A_n \vartheta_{\infty}\|}{\log \|A_n\|}
		\simeq
		\max \left\{ -1, \frac{\log |\sin(\vartheta_{\infty}-\vartheta_{n})|}{\log\|A_n\|}+1\right\}.
	\end{eqnarray}
	Hence, the left condition in \eqref{eq:asymptotic-conditions} is satisfied if and only if
	\begin{eqnarray}
		\limsup_n \frac{\log |\sin(\vartheta_{\infty}-\vartheta_{n})|}{\log\|A_n\|} \leq -2.
	\end{eqnarray}
	But, disregarding multiplicative constants which will disappear in the limit,
	\begin{eqnarray}
		\sin(\vartheta_{\infty}-\vartheta_n)
		\asymp
		\tan(\vartheta_{\infty}-\vartheta_n)
		\asymp
		\tan\vartheta_{\infty}-\tan\vartheta_n
		=
		r_{\infty} - r_n.
	\end{eqnarray}
\end{proof}
We apply this with $A_n = {\bf T}_{\omega,n}$ to prove Proposition \ref{thm:decay-eigenfunctions}. Define
\begin{eqnarray}
	\begin{pmatrix}
		x^{(1)}_{n+1} \\ x^{(1)}_n
	\end{pmatrix}
	=
	{\bf T}_{\omega,n}
	\begin{pmatrix}
		1 \\ 0
	\end{pmatrix}
	\quad 
	{\text and}
	\quad
	\begin{pmatrix}
		x^{(2)}_{n+1} \\ x^{(2)}_n1
	\end{pmatrix}
	=
	{\bf T}_{\omega,n}
	\begin{pmatrix}
		0 \\ 1
	\end{pmatrix}
\end{eqnarray}
and let $R^{(i)}_n,\, \theta^{(i)}_n,\, n\geq 1$, $i=1,\, 2$ be the corresponding Pr\"ufer radii and phases.
 We let $r_n=R^{(1)}_n/R^{(2)}_n$ and $\vartheta_n$ be as in Lemma \ref{thm:eigendirection}. Then it follows from \eqref{eq:prufer-developped} and some elementary trigonometry that
\begin{eqnarray}
	x^{(1)}_{n+1}x^{(2)}_n-x^{(1)}_nx^{(2)}_{n+1}=R^{(1)}_n R^{(2)}_n  \sin k \sin(\theta^{(2)}_n-\theta^{(1)}_n),
\end{eqnarray}
where $\bar{\theta}^{(i)}_n = nk -\theta^{(i)}_n$.
On the other hand,
\begin{eqnarray}
	x^{(1)}_{n+1}x^{(2)}_n-x^{(1)}_nx^{(2)}_{n+1}
	&=&
	\det\left(
	{\bf T}_{\omega,n}
	\begin{pmatrix}
		1 & 0 \\ 0 & 1
	\end{pmatrix}
	\right)	
	=1.
\end{eqnarray}
This, together with the convergence
\begin{eqnarray}
	\lim_{n\to\infty} \frac{R^{(i)}_n}{\log n} = \beta,\quad i=1,2,
\end{eqnarray}
gives
\begin{eqnarray}\label{eq:limit-sines}
	\lim_{n\to\infty}\frac{\log |\sin(\theta^{(2)}_n-\theta^{(1)}_n)|}{\log n} 
	=\lim_{n\to\infty}\frac{\log |\sin(\bar\theta^{(2)}_n-\bar\theta^{(1)}_n)|}{\log n} 
	= -2\beta.
\end{eqnarray}
Remember the decomposition \eqref{eq:martingale-decomposition}. We have to estimate the difference of the expansions for
$\log R^{(1)}_n$ and $\log R^{(2)}_n$.
By \eqref{eq:limit-sines}, $|\sin(\bar\theta^{(2)}_n-\bar\theta^{(1)}_n)| \lesssim n^{-\beta+\epsilon}$, for any $\epsilon>0$.
Hence, there exists random sequences $(m_n)_n\subset\n$ and $(\Delta_n)_n\subset \R$ such that $\bar\theta^{(1)}_n-\bar\theta^{(2)}_n=m_n\pi + \Delta_n$ and $|\Delta_n|\lesssim n^{-\beta+\epsilon}$. Therefore,
\begin{equation}
	\sin(2\bar\theta^{(2)}_n)
	=
	\sin(2\bar\theta^{(1)}_n+2\Delta_n)
	\simeq
	\sin(2\bar\theta^{(1)}_n)+2\cos(2\bar\theta^{(1)}_n)\Delta_n.
\end{equation}
This shows that 
\begin{eqnarray*}
	\left|V_j \left(\sin(2\bar{\theta}^{(1)}_j) - \sin(2\bar{\theta}^{(2)}_j)\right)\right| 
	&\lesssim& j^{-\frac{1}{2}-2\beta + \epsilon},\\
	\text{and}
	\quad
	\left|V_j^2 \left(\cos^2(\bar{\theta}^{(1)}_j) - \cos^2(\bar{\theta}^{(2)}_j)\right)\right| 
	&\lesssim&
	 j^{-1-2\beta + \epsilon},
\end{eqnarray*}
by a similar argument. Hence, 
\begin{eqnarray}\label{eq:decomposition-radius}
	\log r_n 
	&=& 
	-\sum^n_{j=1} \frac{V_{\omega,j}}{\sin k} \left(\sin(2\bar{\theta}^{(1)}_j) - \sin(2\bar{\theta}^{(2)}_j) \right)
	+
	\sum^n_{j=1} A_j
\end{eqnarray}
 where the first sum is a convergent martingale by Lemma \ref{thm:martingales} with $\gamma=\frac12 + 2\beta -\epsilon$ and the second one is absolutely convergent as $A_j=O(j^{-1-2\beta+\epsilon})$. This shows that $r_n\to r_{\infty}\in(0,\infty)$ almost surely which implies that $\vartheta_n$ has a limit $\vartheta_{\infty}\neq 0,\, \pi/2$ by the first part of Lemma \ref{thm:eigendirection}. 

The equivalence \eqref{eq:asymptotic-conditions} in our context corresponds to
\begin{eqnarray}
	\lim_{n\to\infty}\frac{\log R_n( \vartheta_{\infty})}{\log n}=-\beta
	\quad 
	\text{if and only if}
	\quad
	\limsup_n \frac{\log |r_n-r_{\infty}|}{\log n} \leq -2\beta.
\end{eqnarray}
Let us denote by $\log r_n = M_n + S_n$ the decomposition \eqref{eq:decomposition-radius} and $\log r_{\infty}=M_{\infty}+S_{\infty}$ where $M_{\infty}$ and $S_{\infty}$ are the almost sure limits of $M_n$ and $S_n$ respectively. Then,
\begin{eqnarray}
	|r_{\infty}-r_n|
	&=&
	e^{M_{\infty}+S_{\infty}}\left| 1-\e^{M_n-M_{\infty}+S_n-S_{\infty}}\right|
	\simeq
	\e^{M_{\infty}+S_{\infty}}\left| M_n-M_{\infty}+S_n-S_{\infty}\right|\\
	&\lesssim&
	\e^{M_{\infty}+S_{\infty}} n^{-2\beta+2\epsilon},
\end{eqnarray}
by the last statement of Lemma \ref{thm:martingales} with $\gamma = \frac12 + 2\beta - \epsilon$.
This finishes the proof of Proposition \ref{thm:decay-eigenfunctions}.


\subsection{Analysis of the Pr\"ufer phases}\label{app:phases}

We begin with a simple observation: from \eqref{eq:main-rec-complex},
\begin{eqnarray}
	\left| e^{i(\theta_{n+1}-\theta_n)}-1\right| = \left| \frac{\rho_{n+1}}{\rho_n}-1\right|\leq \frac{| V_{\omega,a}|}{|\sin k|}\leq 1,
\end{eqnarray}
for $n$ large enough. Hence, for $n$ large enough, $|\theta_{n+1}-\theta_n|< \pi/2$ and
\begin{eqnarray}
	|\theta_{n+1}-\theta_n| \leq \frac{\pi}{2} |\sin(\theta_{n+1}-\theta_n)| 
	\leq \frac{\pi}{2} \left|e^{i(\theta_{n+1}-\theta_n)}-1\right| 
	\lesssim n^{-\alpha}.
\end{eqnarray}
This can be written in the equivalent form 
\begin{eqnarray}\label{eq:small-phases}
	|\bar{\theta}_{n+1}-\bar{\theta}_n-k|\leq c_0 n^{-\alpha},
\end{eqnarray}
for some $c_0>0$,
which will be more suitable for our purposes.

The next lemma provides the control of the Pr\"ufer phases needed to complete the proof of Proposition \ref{thm:lyapuvov}.
\begin{lemma}
Let $0<\alpha\leq \frac12$.
Let $E\in (-2,2)$ corresponding to values of $k$ different from $\frac{\pi}{4}$, $\frac{\pi}{2}$ and $\frac{3\pi}{4}$. Then,
\begin{eqnarray}
	\limsup_{n\to\infty}
	\frac{\sum^n_{j=1}\esp[ V_{\omega,j}^2]\left( \frac{1}{2}\cos 2 \bar \theta_j + \frac{1}{4} \cos 4 \bar \theta_j\right)}{\sum_{j=1}^n j^{-2\alpha}}
	=
	0.
\end{eqnarray}
\end{lemma}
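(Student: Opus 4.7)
The plan is to reduce the claim to controlling the oscillating exponential sums $\Sigma_m(n):=\sum_{j=1}^n j^{-2\alpha}\e^{im\bar\theta_j}$ for $m\in\{2,4\}$. Since $\esp[V_{\omega,j}^2]=\lambda^2 j^{-2\alpha}$, the numerator equals $\tfrac{\lambda^2}{2}\,\re\,\Sigma_2(n)+\tfrac{\lambda^2}{4}\,\re\,\Sigma_4(n)$, so it suffices to show $\Sigma_m(n)=o\!\left(\sum_{j=1}^n j^{-2\alpha}\right)$ for both values of $m$. The two inputs are: (i) the slow drift of the Pr\"ufer phases provided by \eqref{eq:small-phases}, namely $\bar\theta_{j+1}-\bar\theta_j=k+r_j$ with $|r_j|\leq c_0 j^{-\alpha}$, and (ii) the hypothesis on $k$, which guarantees $\e^{imk}\neq 1$ for $m=2$ (automatic since $k\in(0,\pi)$) and for $m=4$ (since $k=\pi/2$ is excluded, so $4k\notin 2\pi\Z$). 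The main obstacle, and the only step requiring real work, is the bound on the partial sums $T_n:=\sum_{j=1}^n f_j$ where $f_j:=\e^{im\bar\theta_j}$, which combines (i) and (ii).

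From (i) one gets the approximate recursion $f_{j+1}=\e^{imk}f_j+\eta_j$ with $|\eta_j|=|\e^{imk}(\e^{imr_j}-1)f_j|\leq mc_0 j^{-\alpha}$. Iterating yields $f_j=\e^{imk(j-1)}f_1+\sum_{l=1}^{j-1}\e^{imk(j-1-l)}\eta_l$. Summing in $j$, exchanging the order of summation, and using (ii) to bound the partial geometric sums $\bigl|\sum_{p=0}^{q-1}\e^{imkp}\bigr|\leq 2/|1-\e^{imk}|$ uniformly in $q$, one obtains
\begin{eqnarray*}
|T_n|\;\leq\; C+C\sum_{l=1}^{n-1}|\eta_l|\;\leq\; C\,n^{1-\alpha},
\end{eqnarray*}
where the constant depends only on $k$ and $m$.

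Abel summation with $a_j=f_j$ and $b_j=j^{-2\alpha}$ then gives
\begin{eqnarray*}
\Sigma_m(n)\;=\;n^{-2\alpha}T_n\;-\;\sum_{j=1}^{n-1}\bigl((j+1)^{-2\alpha}-j^{-2\alpha}\bigr)T_j.
\end{eqnarray*}
Using $|(j+1)^{-2\alpha}-j^{-2\alpha}|\leq C j^{-2\alpha-1}$ together with $|T_j|\leq Cj^{1-\alpha}$, both contributions are $O(n^{1-3\alpha}+1)$. For $\alpha<\tfrac12$ one has $\sum_{j=1}^n j^{-2\alpha}\asymp n^{1-2\alpha}$, and the ratio is $O(n^{-\alpha}+n^{2\alpha-1})\to 0$; for $\alpha=\tfrac12$, $\sum_{j=1}^n j^{-1}\asymp\log n$ while $n^{1-3\alpha}+1=O(1)=o(\log n)$. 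This establishes the lemma.
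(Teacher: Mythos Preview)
Your argument is correct and gives a genuinely different proof from the paper's. The paper proceeds by a block decomposition: it quotes a trigonometric lemma from \cite{KLS} stating that if $y\notin\pi\Z$ then along a sequence $q_l\to\infty$ one has $\bigl|\sum_{j=1}^{q_l}\cos\theta_j\bigr|\le 1+\sum_{j=1}^{q_l}|\theta_j-\theta_0-jy|$, then writes $n=n_0+Kq_l$, applies this bound on each block of length $q_l$ with $y=4k$ (resp.\ $y=2k$), and controls the phase drift inside blocks via \eqref{eq:small-phases}. Your route is more direct: you treat $f_j=\e^{im\bar\theta_j}$ as a perturbed geometric progression, solve the recursion explicitly, bound the partial sums $T_n$ by a single geometric-sum estimate, and finish with Abel summation. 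This avoids both the auxiliary trigonometric lemma and the choice of block length. As a bonus, your argument only needs $\e^{imk}\neq 1$, i.e.\ $k\neq\pi/2$, whereas the paper's use of the KLS lemma with $y=4k\notin\pi\Z$ forces the additional exclusions $k\neq\pi/4,\,3\pi/4$. One minor imprecision: at the borderline $\alpha=\tfrac13$ the Abel sum $\sum_j j^{-3\alpha}$ contributes $O(\log n)$ rather than $O(1)$, so ``$O(n^{1-3\alpha}+1)$'' should read ``$O(n^{1-3\alpha}+\log n+1)$''; this is harmless since the denominator grows like $n^{1/3}$ in that case.
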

\begin{proof}
	The key to the proof is \cite[Lemma 8.5]{KLS} which states: suppose that $y\in\R$ is not in $\pi \Z$. Then, there exists a sequence of integers $q_l \to \infty$ such that
	\begin{eqnarray}\label{eq:trigonometrics}
		\left|\sum^{q_l}_{j=1} \cos \theta_j \right| \leq 1 + \sum^{q_l}_{j=1} \left| \theta_j-\theta_0 - jy \right|,
	\end{eqnarray}
	for all $(\theta_j)_{j\geq 0}\subset \R$.
	
	We will treat the term with $\cos(4\bar{\theta}_j)$ as the other one is similar. We will take $y=4k$ above. 
	Let $n$ be large enough so that it can be written as 	$n=n_0+Kq_l$ with $n_0\ge q_l^2$ and $4c_0 n_0^{-\alpha}\le q_l^{-2} $ (where $c_0$ is the constant from \eqref{eq:small-phases}). Then,
\begin{eqnarray}
	 \left|\sum_{j=n_0+1}^n j^{-2\alpha}\cos(4\bar\theta_j)\right|
	 &=&
	 \left|\sum_{m=0}^K \sum_{r=1}^{q_l}(n_0+mq_l+r)^{-2\alpha}\cos(4\bar\theta(n_0+mq_l+r))\right|
	 \\
	&\leq&	 
	\sum_{m=0}^K(n_0+mq_l)^{-2\alpha} \left|\sum_{r=1}^{q_l}\cos(4\bar\theta(n_0+mq_l+r))\right|
	\\
	&+&
	\sum_{m=0}^K \sum_{r=1}^{q_l} \left|(n_0+mq_l+r)^{-2\alpha}-(n_0+mq_l)^{-2\alpha}\right|
	\\
	&=:&
	A + B.
\end{eqnarray}
By \eqref{eq:trigonometrics},
\begin{eqnarray}
	A
	&\leq&
	\sum_{m=0}^{K}(n_0+mq_l)^{-2\alpha}\left(1+4\sum_{r=1}^{q_l}|\bar\theta(n_0+mq_l+r)-\bar\theta(n_0+mq_l)-kr|\right).
\end{eqnarray}
Now, by \eqref{eq:small-phases},
\begin{eqnarray}
	&&
	4\sum_{r=1}^{q_l}|\bar\theta(n_0+mq_l+r)-\bar\theta(n_0+mq_l)-kr|
	\leq
	c_0 \sum_{r=1}^{q_l}\sum_{s=1}^r(n_0+mq_l+r)^{-\alpha}
	\\
	&&
	\phantom{blablablabla}
	\leq
	c_0(n_0+mq_l)^{-\alpha}\sum_{r=1}^{q_l} r
	\leq
	c_0 q_l^2 n_0^{-\alpha}
	\leq
	1.
\end{eqnarray}
Thus,
\begin{eqnarray*}
	A
	\leq
	2
	\sum_{m=0}^{K}(n_0+mq_l)^{-2\alpha}
	\leq
	2 q_l^{-2\alpha} \sum_{m=0}^{K}(n_0q_l^{-1}+m)^{-2\alpha}
	\leq 
	c_1 q_l^{-2\alpha} K^{1-2\alpha}
	\leq
	c_1 q_l^{-1} n^{1-2\alpha},
\end{eqnarray*}
for some finite $c_1>0$. To estimate $B$, we use that
\begin{eqnarray}
	\left| (n_0+mq_l+r)^{-2\alpha}-(n_0+mq_l)^{-2\alpha}\right|
	\leq
	c_2 (n_0+mq_l)^{-2\alpha-1} r,
\end{eqnarray}
for some finite $c_2>0$ which shows that
\begin{eqnarray}
	B
	&\leq&
	c_2 \sum_{m=0}^K \sum_{r=1}^{q_l} (n_0+mq_l)^{-2\alpha-1} r
	\leq 
	c_2 q_l^2 n_0^{-1}
	\sum_{m=0}^K
	(1+n_0^{-1}mq_l)^{-1}(n_0+mq_l)^{-2\alpha}
	\\
	&\leq&
	c_2
	\sum_{m=0}^K
	(n_0+mq_l)^{-2\alpha},
\end{eqnarray}
where we used $q_l^2 n_0^{-1}\leq 1$. This last sum can be estimated as above. 
Summarizing,
\begin{eqnarray}
	\left|\sum^{q_l}_{j=1} \cos \theta_j \right|
	\leq
	c_3 q_l^{-1} n^{1-2\alpha},
\end{eqnarray}
for some finite $c_3>0$. This finishes the proof.
\end{proof}



\end{document}